\newcommand{\emptypath}{\ensuremath{\varepsilon}}
\newcommand{\defn}{\ensuremath{\stackrel{\textup{\tiny def}}{=}}}
\newcommand{\CBMC}{{\sc Cbmc}\xspace}
\newcommand{\SVCOMP}{{SV-COMP14}\xspace}
\newcommand{\wlpcond}[2]{
  \ensuremath{wlp({#1},{#2})}\xspace
}
\newcommand{\true}{{\ensuremath{\mathsf{T}}}}
\newcommand{\false}{\ensuremath{\mathsf{F}}}
\newcommand{\Vars}{\ensuremath{\mathsf{Vars}}\xspace}
\newcommand{\Exprs}{\ensuremath{\mathsf{Exprs}}\xspace}
\newcommand{\BExprs}{\ensuremath{\mathsf{\mathds{B}\text{-}Exprs}}\xspace}
\newcommand{\Stmts}{\ensuremath{\mathsf{Stmts}}\xspace}
\newcommand{\PStmts}{\ensuremath{\mathsf{Stmts}_P}\xspace}
\newcommand{\AccPStmts}{\ensuremath{\mathsf{Stmts}_{\acc{P}}}\xspace}
\newcommand{\States}{\ensuremath{\mathsf{States}}\xspace}
\newcommand{\stmt}{\ensuremath{\mathsf{stmt}}\xspace}
\newcommand{\skipstmt}{\ensuremath{\mathsf{skip}}\xspace}
\newcommand{\trans}[3]{\ensuremath{{#1}\stackrel{{#3}}{\longrightarrow}{#2}}}
\newcommand{\sem}[1]{\ensuremath{\llbracket{#1}\rrbracket}}
\newcommand{\id}{\ensuremath{\mathsf{id}}\xspace}
\newcommand{\lang}{\ensuremath{\mathcal{L}}\xspace}
\newcommand{\state}{\ensuremath{\sigma}\xspace}
\newcommand{\acc}[1]{\ensuremath{\widehat{{#1}}}}
\newcommand{\bound}[1]{\ensuremath{\beta({#1})\xspace}}
\newcommand{\uacc}[1]{\ensuremath{\widetilde{{#1}}}}
\newcommand{\predovfl}[1]{\ensuremath{\varphi_{{#1}}}\xspace}
\newcommand{\stmtovfl}[1]{\ensuremath{\overline{\tau}_{{#1}}}\xspace}
\newcommand{\stmtnoovfl}[1]{\ensuremath{\tau_{{#1}}}\xspace}
\def\fork{
\raisebox{-1.15em}{\begin{tikzpicture}
  \node at (0,0cm) (first) [draw,circle,fill=white,text width=7pt,align=center,
        inner sep=0pt] {\scriptsize $u$}; 
  \node at (.5cm,0) (second) [draw,circle,fill=white,text width=7pt,align=center,
        inner sep=0pt] {\scriptsize $w$}; 
  \draw[->] (first.north east) to [out=45, in=135] node [above=-2bp] {\scriptsize
    $\stmtnoovfl{\pi}$} (second.north
  west);
  \draw[->] (first.south east) to [out=-45, in=-135] node [below=-2bp]
       {\scriptsize $\stmtovfl{\pi}$} (second.south west);
\end{tikzpicture}}
}
\newcommand{\safe}{\ding{52}}
\newcommand{\unsafe}{\ding{56}}
\newcommand{\incomplete}{?}
\newcommand{\crash}{\ensuremath{\lightning}}
\newcommand{\timeout}{T/O}
\begin{document}

\pagestyle{headings}  

\title{Proving Safety with Trace Automata\\ and Bounded Model Checking}

\author{Daniel Kroening\inst{1} \and Matt Lewis\inst{1} \and Georg Weissenbacher\inst{2}}

\institute{University of Oxford \and Vienna University of Technology}

\maketitle  
\pagenumbering{arabic}

\begin{abstract}
Loop under-approximation is a technique that enriches
C programs with additional branches that represent
the effect of a (limited) range of loop iterations. While
this technique can speed up the detection of bugs significantly,
it introduces redundant execution traces which may complicate the
verification of the program. This holds particularly true for
verification tools based on Bounded Model Checking, which 
incorporate simplistic heuristics to determine whether all
feasible iterations of a loop have been considered.

We present a technique that uses \emph{trace automata}
to eliminate redundant executions after performing loop
acceleration. The method reduces the diameter of the program
under analysis, which is in certain cases sufficient to
allow a safety proof using Bounded Model Checking. 
Our transformation is precise---it does not introduce
false positives, nor does it mask any errors. We have implemented the
analysis as a source-to-source transformation, and present experimental
results showing the applicability of the technique.
\end{abstract}

\section{Introduction}
\label{sec:intro}

Software verification can be loosely divided into two themes: finding bugs
and proving correctness.  These two goals are often at odds with one
another, and it is rare that a tool excels at both tasks.  This tension is
well illustrated by the results of the 2014 Software Verification
Competition (\SVCOMP)~\cite{svcomp14}, in which several of the
best-performing tools were based on Bounded Model Checking
(BMC)~\cite{BiereCCZ99}.  The BMC-based tools were able to quickly find
bugs in the unsafe programs, but were unable to soundly prove safety for the
remaining programs.  Conversely, many of the sound tools had difficulty in
detecting bugs in the unsafe programs.

The reasons for this disparity are rooted in the very nature of contemporary
verification tools.  Tools aiming at proof typically rely on
over-approximating abstractions and refinement techniques to derive the loop
invariants required (e.g.,~\cite{hrmgs02,McMillan06}).  For certain classes
of programs, invariants can be found efficiently using
templates~\cite{BeyerHMR07} or theorem provers~\cite{KovacsV09FASE}.  For
unsafe programs, however, any attempt to construct a safety invariant must
necessarily fail, triggering numerous futile refinement iterations before a
valid counterexample is detected.  Verifiers based on the BMC paradigm (such
as \CBMC~\cite{ckl2004}), on the other hand, are able to efficiently detect
shallow bugs, but are unable to prove safety in most cases.


The key principle of this paper is that BMC is able to prove safety once the
unwinding bound exceeds the reachability diameter of the
model~\cite{BiereCCZ99,KroeningS03}.  The diameter of non-trivial programs is
however in most cases unmanageably large.  Furthermore, even when the
diameter is small, it is often computationally expensive to determine, as
the problem of computing the exact diameter is equivalent to a 2-QBF
instance.

The contribution of this paper is a technique that reduces the diameter of a
program in a way that the new, smaller diameter can be computed by means of
a simple satisfiability check. The technique has two steps:
\begin{enumerate}

\item We first identify potentially deep program paths that can be replaced
by a concise single-step summary called an
\emph{accelerator}~\cite{boigelot99,fl2002,BozgaIK10}.

\item We then remove those paths subsumed by the accelerators from the
program using \emph{trace automata}~\cite{HeizmannHP09}.

\end{enumerate}
The resulting program preserves the reachable states of the original
program, but is often very shallow, and consequently, we can obtain a sound
verification result using BMC.

Our paper is organised as follows: We present a number of
motivating examples and an outline of our approach in
Section~\ref{sec:motivation}. Section~\ref{sec:preliminaries} presents our
notation, recapitulates the concept of a
reachability diameter, and introduces 
a generalised notion of the under-approximating accelerators
presented in~\cite{KroeningLW13}. 
Section~\ref{sec:reduction} describes the construction
of accelerated programs and discusses the resulting reduction
of the reachability diameter of the program. 
In Section~\ref{sec:safety}, we introduce restricting
languages and trace automata as a means to eliminate
redundant transitions from accelerated programs.
The experimental evaluation based on a selection of
\SVCOMP benchmarks is presented in Section~\ref{sec:experiments}.
Finally, Section~\ref{sec:related} briefly surveys related work.

\section{Motivation}
\label{sec:motivation}

In this section we will discuss the differences between proving safety and
finding bugs, with reference to some \SVCOMP benchmarks, and informally
demonstrate why our method is effective for both kinds of analyses.

The program in Figure~\ref{fig:safe}, taken from the \textsc{Loops}
category of \SVCOMP, proved challenging for many of the
participating tools, with only 6 out of the 12 entrants solving it correctly. 
A proof of safety for this program using an abstract interpreter requires a
relational domain to represent the invariant 
${\tt x}+{\tt y}={\tt N}$, which is often expensive.

The program in Figure~\ref{fig:unsafe} resembles the one
in Figure~\ref{fig:safe}, except for
the negated assertion at the end.  This example is very easy for Bounded
Model Checkers, which are able to discover a bug in a single unwinding by
assigning ${\tt N}=1$. A~slight modification, however,
illustrated in Figure~\ref{fig:deep}, increases the number of loop
iterations required to trigger the bug to $10^6$, exceeding the
capability of even the best BMC-based verification tools.

\begin{figure}\centering
  \parbox{.3\textwidth}{
    \fbox{
    \begin{minipage}{\textwidth}
      \begin{tabbing}
        \qquad\=\qquad\=\qquad\=\qquad\=\kill
        ${\tt unsigned~N:=*;}$\\
        ${\tt unsigned~x:=N,  y:= 0;}$\\
        ${\tt while~(x>0)~\{}$\\
        \>${\tt x:=x-1;}$\\ 
        \>${\tt y:=y+1;}$\\
        ${\tt\}}$\\
        ${\tt assert~(y=N)}$;
      \end{tabbing}
    \end{minipage}}
    \caption{Safe program\label{fig:safe}}
  }\quad
  \parbox{.3\textwidth}{\centering
    \fbox{
    \begin{minipage}{\textwidth}
      \begin{tabbing}
        \qquad\=\qquad\=\qquad\=\qquad\=\kill
        ${\tt unsigned~N=*;}$\\
        ${\tt unsigned~x:=N,  y:=0;}$\\
        ${\tt while~(x>0)~\{}$\\
        \>${\tt x:=x-1;}$\\ 
        \>${\tt y:=y+1;}$\\
        ${\tt\}}$\\
        ${\tt assert~(y\neq N)}$;
      \end{tabbing}
    \end{minipage}}    
    \caption{Unsafe program\label{fig:unsafe}}
  }\quad
  \parbox{.3\textwidth}{
    \fbox{
    \begin{minipage}{\textwidth}
      \begin{tabbing}
        \qquad\=\qquad\=\qquad\=\qquad\=\kill
        ${\tt unsigned~N:=10^6;}$\\
        ${\tt unsigned~x:=N, y:= 0;}$\\
        ${\tt while~(x>0)~\{}$\\
        \>${\tt x:=x-1;}$\\ 
        \>${\tt y:=y+1;}$\\
        ${\tt\}}$\\
        ${\tt assert~(y\neq N)}$;
      \end{tabbing}
    \end{minipage}}
    \caption{``Deep'' bug\label{fig:deep}}
  }
\end{figure}

The relative simplicity of the program statements in
Figures~\ref{fig:safe} to \ref{fig:deep} makes
them amenable to \emph{acceleration}~\cite{boigelot99,fl2002,BozgaIK10},
a technique used to compute the effect of the repeated iteration of 
statements over integer linear arithmetic. Specifically,
the effect of $i$ loop iterations is that {\tt x} is decreased
and {\tt y} is increased by $i$. 
Acceleration, however, is typically restricted to programs over
fragments of linear arithmetic for which the transitive
closure is effectively computable, thus restricting its applicability
to programs whose semantics can be soundly modelled using unbounded integers.
In reality, however,
the scalar variables in Figures~\ref{fig:safe} 
to \ref{fig:deep} take their values from the bounded
subset $\{0,\ldots,(2^{32}-1)\}$ of the positive integers $\mathds{N}_0$.
Traditional acceleration techniques do not account for integer
overflows. To address this problem, we previously introduced
\emph{under-approximate acceleration}, bounding the acceleration
to the interval in which the statements behave 
uniformly~\cite{KroeningLW13}. 

The code snippet in Figure~\ref{fig:acc_body} represents an 
under-approximating accelerator for the loop bodies in Figures~\ref{fig:safe},
\ref{fig:unsafe}, and \ref{fig:deep}.
\begin{figure}[t]\centering
\parbox{.55\textwidth}{\centering
  \begin{minipage}[t]{.52\textwidth}
    \begin{displaymath}
      \begin{aligned}
        &\left.\begin{array}{p{3.5cm}}
          ${\tt unsigned\;}i{\tt\,:=*;}$\\
          ${\tt assume\;(}i{\tt\,>0)}$\\
        \end{array}\right\}\text{iteration counter}\\
        &\left.\begin{array}{p{3.5cm}}
          ${\tt assume(x>0);}$\\
        \end{array}\right\}\text{feasibility check}\\
        &\left.\begin{array}{p{3.5cm}}
          ${\tt x:=x-}i{\tt;}$\\
          ${\tt y:=y+}i{\tt;}$\\
        \end{array}\right\}\text{acceleration}\\
        &\left.\begin{array}{p{3.5cm}}
          ${\tt assume(\neg{\sf underflow}\;(x));}$\\
        \end{array}\right\}\text{iteration bound}
      \end{aligned}
    \end{displaymath}
  \end{minipage}
  \caption{Accelerated loop body\label{fig:acc_body}}
}\;
\parbox{.42\textwidth}{\centering
  \fbox{\begin{minipage}[t]{\textwidth}
      \begin{tabbing}
        \qquad\=\qquad\=\qquad\=\qquad\=\kill
        ${\tt unsigned~N:=10^6, x:=N, y:= 0;}$\\
        ${\tt while~(x>0)~\{}$\\
        \>${\tt if~(*)~\{}$\\
        \>\>$i{\tt\,:=*;\;assume\;(}i{\tt\,>0);}$\\
        \>\>${\tt x:=x-}i{\tt;\; y=y+}i{\tt;}$\\
        \>\>${\tt assume\;(x\geq 0);}$\\
        \>${\tt\}\;else\;\{}$\\
        \>\>${\tt x:=x-1;\;y:=y+1;}$\\
        \>${\tt\}}$\\
        ${\tt\}}$\\
        ${\tt assert~(y\neq N)}$;
  \end{tabbing}\end{minipage}}
  \caption{Accelerated unsafe program\label{fig:accelerated1}}
}
\end{figure}
We introduce an auxiliary variable $i$ representing
a non-deterministic number of loop iterations. The
subsequent assumption guarantees that the accelerated
code reflects at least one iteration (and is optional
in this example). The assumption that follows warrants
the feasibility of the 
accelerated trace (in general, this condition
may contain quantifiers~\cite{KroeningLW13}). The
effect of $i$ iterations is encoded using the two
assignment statements, which constitute the closed
forms of the recurrence relations corresponding to
the original assignments. The final assumption
guarantees that $i$ lies in the range in which
the right-hand sides of the assignments behave linearly.

In general, under-approximating accelerators do not reflect
all feasible iterations of the loop body. Accordingly,
we cannot simply replace the original loop body. 
Instead, we add back the accelerator as an additional
path through the loop, as illustrated in 
Figure~\ref{fig:accelerated1}.

The transformation preserves safety properties---that is to say, 
an accelerated program has a reachable, failing assertion iff the
original program does. We can see that the failing assertion in
Figure~\ref{fig:accelerated1} is reachable after a single iteration of the
loop, by simply choosing $i={\tt N}$.
Since the accelerated program contains a feasible trace leading to a failed
assertion, we can conclude that the original program does as well, despite
having only considered a single trace of length~1.

\begin{figure}[t]\centering
  \parbox{.4\textwidth}{\centering
    \fbox{\begin{minipage}[b]{.4\textwidth}
        \begin{tabbing}
          \qquad\=\qquad\=\qquad\=\qquad\=\kill
          ${\tt unsigned~N:=10^6, x:=N, y:= 0;}$\\
          ${\tt if~(x>0)~\{}$\\
          \>${\tt x:=x-1;\; y:=y+1;}$\\
          \>${\tt if~(x>0)~\{}$\\
          \>\>${\tt x:=x-1;\; y:=y+1;}$\\
          \>\>${\tt if~(x>0)~\{}$\\
          \>\>\>${\tt x:=x-1;}$\\
          \>\>\>${\tt y:=y+1;}$\\
          \>\>\>${\tt assert\;(x\leq 0);}$\\
          \>\>${\tt\}}$\\
          \>${\tt\}}$\\
          ${\tt\}}$\\
          ${\tt assert\;(y=N)}$;
    \end{tabbing}\end{minipage}}
  \caption{Unwinding ($k=3$) of safe program with 
    ${\tt N=10^6}$\label{fig:unwinding}}}\qquad %
  \parbox{.5\textwidth}{\centering
    \fbox{\begin{minipage}[b]{.5\textwidth}
        \begin{tabbing}
          \qquad\=\qquad\=\qquad\=\qquad\=\kill
          \>${\tt unsigned~N:=*, x:=N, y:= 0;}$\\
          \>${\tt bool~g:=*;}$\\
          1:\>${\tt while~(x>0)~\{}$\\
          \>\>${\tt if~(*)~\{}$\\
          \>\>\>${\tt assume\;(\neg g);}$\\
          2:\>\>\>$i{\tt\,:=*;\;x:=x-}i{\tt;\; y=y+}i{\tt;}$\\
          \>\>\>${\tt assume\;(x\geq 0);}$\\
          3:\>\>\>${\tt g:=\true;}$\\
          \>\>${\tt\}\;else\;\{}$\\
          \>\>\>${\tt x:=x-1;\;y:=y+1;}$\\
          \>\>\>${\tt assume\;({\sf underflow}\;(x));}$\\
          \>\>\>${\tt g:=\false;}$\\
          \>\>${\tt\}}$\\
          \>${\tt\}}$\\
          4:\>${\tt assert~(y=N)}$;
    \end{tabbing}\end{minipage}}
    \caption{Accelerated and instrumented safe 
      program\label{fig:restricted1}}}
\end{figure}

While the primary application of BMC is bug detection, contemporary
Boun\-ded Model Checkers such as \CBMC are able to prove safety
in some cases. \CBMC unwinds loops up to a predetermined 
bound $k$ (see Figure~\ref{fig:unwinding}).
 \emph{Unwinding assertions} are one possible mechanism
to determine whether further unwinding is
required~\cite{ckl2004,dkw2008}. The assertion ${\tt(x\leq 0)}$
in Figure~\ref{fig:unwinding} fails if there are feasible
program executions traversing the loop more than three times.
It is obvious that this assertion will fail for any $k<10^6$.

Unfortunately, acceleration is ineffective in this setting.
Since the accelerator in Figure~\ref{fig:accelerated1}
admits $i=1$, we have to consider $10^{6}$ unwindings
before we can establish the safety of the program
in Figure~\ref{fig:safe} with ${\tt N=10^6}$. For a
non-deterministically assigned ${\tt N}$, this number increases to $2^{32}$.

This outcome is disappointing, since the repeated iteration
of the accelerated loop body is redundant. Furthermore,
there is no point in taking the unaccelerated path through 
the loop (unless there is an impending overflow---which can
be ruled out in the given program), since 
the accelerator \emph{subsumes} this execution (with $i=1$).
Thus, if we eliminate all executions that meet either of the criteria above,
we do not alter the semantics of the program but may reduce the difficulty
of our problem considerably.

Figure~\ref{fig:restricted1} shows an accelerated 
version of the safe program of Figure~\ref{fig:safe}, but 
instrumented to remove redundant traces. This is achieved
by introducing an auxiliary variable ${\tt g}$ which
determines whether the accelerator was traversed in the
previous iteration of the loop. This flag is reset
in the non-accelerated branch, which, however, in
our example is never feasible.
It is worth noting that every feasible trace
through Listing~\ref{fig:safe} has a corresponding feasible
trace through Listing~\ref{fig:restricted1}, and vice versa.\\

\noindent%
\parbox{.63\textwidth}{
  The figure to the right shows an execution 
  of the program in Figure~\ref{fig:restricted1}:
  This trace is both feasible and safe---the 
  assertion on line~4 is not violated.
  It is not too difficult to see that \emph{every} feasible trace through
  the program in Figure~\ref{fig:restricted1} has the same length, 
  which means that we can soundly reason about its
  safety considering traces with a single iteration of the loop, 
  which is a tractable (and indeed, easy) problem.}\quad
\parbox{.3\textwidth}{
  \fbox{
    \begin{tabular}{lllllc}
      Loc. & ${\tt N}$ & ${\tt x}$ & ${\tt y}$ & $i$ & ${\tt g}$ \\
      \hline
      1      & $10^4$ & $10^4$ & 0 & 0 & \false \\
      2      & $10^4$ & $10^4$ & 0 & 0 & \false \\
      3      & $10^4$ & 0       & $10^4$ & $10^4$ & \false \\
      1      & $10^4$ & 0       & $10^4$ & $10^4$ & \true \\
      4      & $10^4$ & 0       & $10^4$ & $10^4$ & \true
  \end{tabular}
  }
}\\

Since the accelerated and instrumented
program in Figure~\ref{fig:restricted1}
is safe, we can conclude that the original program
in Figure~\ref{fig:safe} is safe as well.

We emphasise that our approach neither introduces an
over-approximation, nor requires the explicit
computation of a fixed point. In addition, it is
not restricted to linear integer arithmetic and bit-vectors:
our prior work can generate some non-linear accelerators
and also allows for the acceleration of a limited
class of programs with arrays~\cite{KroeningLW13}.

\section{Notation and Basic Concepts}
\label{sec:preliminaries}

\begin{table}\centering
  \caption{Program Statements and Traces}
  \subfigure[Syntax and Semantics\label{tbl:stmts}]{
    \begin{minipage}{.4\textwidth}
    \begin{gather*}
      {\tt stmt} ::= {\tt x}:=e \;\vert\; [B] \;\vert\; \skipstmt \\
      ({\tt x}\in\Vars,\; e\in\Exprs,\; B\in\BExprs)\\
      \begin{array}{lcl}
        \wlpcond{{\tt x}:=e}{P}&\defn& P[e/ {\tt x}] \\
        \wlpcond{{\tt x}:=*}{P}&\defn& \forall {\tt x}\,.\, P \\
        \wlpcond{[B]}{P}&\defn& B \Rightarrow P\\
        \wlpcond{\skipstmt}{P} &\defn& P \\
      \end{array}
    \end{gather*}
    \end{minipage}
  }\quad
  \subfigure[Transition Relations for Traces\label{tbl:trans}]{
    \begin{minipage}{.5\textwidth}
    \begin{gather*}
      \begin{array}{rcl}
      \sem{\stmt}&\defn&\neg\wlpcond{\stmt}{\bigvee_{{\tt x}\in\Vars} {\tt x}\neq{\tt
          x}'}\\
      \id&\defn &\sem{\skipstmt}\\
      \sem{\stmt_1\cdot\stmt_2}&\defn&\sem{\stmt_1}\circ\sem{\stmt_2}\\
      \sem{\stmt^n}&\defn& \sem{\stmt}^n,\\
      \end{array}\\
      \text{~where~} 
      \begin{array}{rcl}
        \stmt^0&\defn&\emptypath,\\
        \stmt^n&\defn&\stmt\cdot(\stmt^{(n-1)}) \\
        \sem{\stmt}^0&\defn&\id,\\
        \sem{\stmt}^n&\defn&
        \sem{\stmt}\circ(\sem{\stmt}^{(n-1)})\\
      \end{array}
    \end{gather*}
    \end{minipage}
  }
\end{table}

Let \Stmts be the (infinite) set of statements of a simple programming
language as defined in Table~\ref{tbl:stmts}, where
\Exprs and \BExprs denote expressions and predicates
over the program variables \Vars, respectively. Assumptions
are abbreviated by $[B]$, and assertions are modeled using
assumptions and error locations. For brevity, we omit
array accesses. We assume that different occurrences of statements 
are distinguishable (using the program locations).
The semantics is provided by the weakest liberal precondition 
\emph{wlp} as defined in~\cite{Nelson89}. Programs are represented
using control flow automata.

\begin{definition}[CFA] A \emph{control flow automaton} $P$
  is a directed graph $\langle V, E, v_0\rangle$,
  where $V$ is a finite set of vertices,
  $\PStmts\subseteq\Stmts$ is a finite set of statements,
  $E\subseteq\left(V\times\PStmts\times V\right)$ is 
  a set of edges, and $v_0\in V$ is the initial vertex.
  We write $\trans{v}{u}{\stmt}$ if
  $\langle u,\stmt,v\rangle\in E$.
  \label{def:cfa}
\end{definition}

A program state $\state$ is a total function assigning a value
to each program variable in $\Vars$. $\States$ denotes the set 
of program states. A transition relation $T\subseteq\States\times\States$
associates states with their successor states. Given $\Vars$,
let $\Vars'$ be a corresponding set of primed variables encoding
successor states.
The symbolic transition relation for a statement or trace is a predicate over 
$\Vars\cup\Vars'$ and can be derived using \emph{wlp}
as indicated in Table~\ref{tbl:trans} (cf.~\cite{ewd821}).
We write $\langle\state,\state'\rangle\in\sem{\stmt}$ if
$\sem{\stmt}$ evaluates to true under $\state$ and $\state'$
(i.e., $\state,\state'\models\sem{\stmt}$).
A trace $\pi$ is \emph{feasible} if there exist states
$\sigma,\sigma'$ such that $\langle\state,\state'\rangle\in\sem{\pi}$.

Given a CFA $P\defn\langle V, E, v_0\rangle$,
a trace $\pi\defn\stmt_{i}\cdot\stmt_{i+1}\cdots\stmt_{n}$ 
(where $\trans{v_{j-1}}{v_j}{\stmt_j}$ for $i<j\leq n$) 
of length $\vert\pi\vert=n-i+1$ is
\emph{looping} (with head $v_i$) 
iff $v_i=v_n$, and \emph{accepted} by the CFA 
iff $v_i=v_0$. 
We use $\lang_P$ to denote the set of all traces
that are accepted by the CFA $P$.
Abusing our notation, we write 
$\trans{v_{i}}{v_j}{\pi}$ to denote path 
starting at $v_i$ and ending at $v_j$ and corresponding
to the trace $\pi$.
  
A state $\state$ is \emph{reachable} from an initial state $\state_0$
iff there exists a trace $\pi$ accepted by the CFA such that
$\langle\sigma_0,\sigma\rangle\in\sem{\pi}$. The reachability
diameter~\cite{BiereCCZ99,KroeningS03} of a transition relation is the 
smallest number of steps required to reach all reachable states:
\begin{definition}[Reachability Diameter]
  \label{def:rd}
  Given a CFA with initial state $\state_0$, the 
  \emph{reachability diameter}
  is the smallest $n$ such that for every 
  state $\state$ reachable from $\state_0$ 
  there exists a feasible trace $\pi$ of length
  at most $n$ accepted by the CFA with
  $\langle\state_0,\state\rangle\in\sem{\pi}$.
\end{definition}

To show that a CFA does not violate a given
safety (or reachability) property, it is
sufficient to explore all feasible traces whose length
does not exceed the reachability diameter.
In the presence of looping traces, however, 
the reachability diameter of a program
can be infinitely large. 

Acceleration~\cite{boigelot99,fl2002,BozgaIK10} is a technique
to compute the reflexive transitive closure 
$\sem{\pi}^*\defn\bigcup_{i=0}^{\infty}\sem{\pi}^i$
for a looping trace $\pi$. Equivalently, 
$\sem{\pi}^*$ can be expressed as $\exists
i\in\mathds{N}_0\,.\,\sem{\pi}^i$. The aim
of acceleration is to express $\sem{\pi}^*$ in a
decidable fragment of logic. In general,
this is not possible, even if $\sem{\pi}$ is
defined in a decidable fragment of integer
arithmetic such as Presburger arithmetic.
For octagonal relations $\sem{\pi}$, however,
the transitive closure is $\sem{\pi}^*$ is
Presburger-definable and effectively
computable~\cite{boigelot99,fl2002}. 

\begin{definition}[Accelerated Transitions]
  \label{def:acc}
  Given a looping trace~$\pi\in\lang_P$, we say that a trace
  $\hat{\pi}\in\Stmts^*$
  is an \emph{accelerator} for $\pi$ if
  $\sem{\acc{\pi}}\equiv \sem{\pi}^*$.

  An accelerator $\uacc{\pi}\in\Stmts^*$ is \emph{under-approximating}
  if the number of iterations is bounded from above by 
  a function $\beta:\States\rightarrow\mathds{N}_0$ 
  of the starting state $\state$:
  \begin{displaymath}
    \langle\state,\state'\rangle\in\sem{\uacc{\pi}}\quad\text{iff}\quad
    \exists i\in\mathds{N}_0\,.\,
    i\leq\bound{\state}\wedge\langle\state,\state'\rangle\in\sem{\pi}^i
  \end{displaymath}
  We require that the function $\beta$ has the following property:
  \begin{equation}
    \label{eq:bound}
    \left(
      i\leq\bound{\state}\wedge\langle\state,\state'\rangle\in\sem{\pi}^i
    \right)\Rightarrow
    \left(\bound{\state'}\leq\bound{\state}-i\right)
  \end{equation}

  We say that $\uacc{\pi}$ is strictly under-approximating
  if $\sem{\uacc{\pi}}\subset\sem{\acc{\pi}}$.
\end{definition}

We introduced under-approximating accelerators 
for linear integer arithmetic and the theories of bit-vectors and arrays 
in~\cite{KroeningLW13} in order
to accelerate the detection of counterexamples. 
Under-approximations are caused by transition relations that
can only be accelerated within certain intervals, e.g., the range
in which no overflow occurs in the case of bit-vectors, or in
which no conflicting assignments to array elements are made. 
The bound function $\beta$ restricts this interval accordingly.

\begin{example} An under-approximating accelerator 
for the statement ${\tt x:=x+1}$, where
${\tt x}$ is a 32-bit-wide unsigned integer, can be given as
\begin{displaymath}
\uacc{\pi}\defn i:=*{\tt ;\,[x}+i< 2^{32}{\tt ]; x}:={\tt x}+i
\end{displaymath}
with transition relation $\exists i\,.\,\left({\tt x}+i<2^{32}\right)\wedge
\left({\tt x}'={\tt x}+i\right)$. Note that $\beta$ is implicit
here and that the alphabet of $\uacc{\pi}$ is not restricted to
$\PStmts$.
\end{example}

\section{Diameter Reduction via Acceleration}
\label{sec:reduction}

In this section, we introduce a reachability-preserving
program transformation that reduces the reachability
diameter of a CFA. While a similar transformation is
used in~\cite{KroeningLW13} to detect counterexamples
with loops, our goal here is to reduce the diameter
in order to enable safety proofs (see Section~\ref{sec:safety}).

\begin{definition}[Accelerated CFA]
  \label{def:acc_cfa}
  Let $P\defn\langle V, E, v_0\rangle$ be a CFA over
  the alphabet $\PStmts$, and let $\pi_1,\ldots,\pi_k$ be 
  traces in $P$ looping with heads $v_1,\ldots,v_k\in V$, respectively.
  Let $\acc{\pi}_1,\ldots\acc{\pi}_k$ be the
  (potentially under-approximating) accelerators for 
  $\pi_1,\ldots,\pi_k$. Then the
  \emph{accelerated CFA}
  $\acc{P}\defn\langle \acc{V}, \acc{E}, v_0\rangle$
  for $P$ is the CFA $P$ augmented with non-branching paths
  $\trans{v_i}{v_i}{\acc{\pi}_i}$
  ($1\leq i \leq k$).
\end{definition}


A trace is \emph{accelerated} if it traverses a path in
$\acc{P}$ that corresponds to an accelerator.
A trace $\pi_1$ \emph{subsumes} a
trace $\pi_2$, denoted by $\pi_2\preceq\pi_1$,
if $\sem{\pi_2}\subseteq\sem{\pi_1}$. Accordingly, 
$\pi\preceq\acc{\pi}$ and $\uacc{\pi}\preceq\acc{\pi}$
(by Definition~\ref{def:acc}). We extend the 
relation $\preceq$ to sets of traces: 
$\Pi_1\preceq\Pi_2$ if 
$\left(\bigcup_{\pi\in\Pi_1}\sem{\pi}\right)\preceq
\left(\bigcup_{\pi\in\Pi_2}\sem{\pi}\right)$. A trace $\pi$
is \emph{redundant} if $\{\pi\}$ is subsumed by a set
$\Pi\setminus\{\pi\}$ of other traces in the CFA.

\begin{lemma}
  \label{lem:subsume_iteration}
  Let $\uacc{\pi}$ be an under-approximating accelerator
  for the looping trace $\pi$. Then 
  $\uacc{\pi}\cdot\uacc{\pi}\preceq\uacc{\pi}$ holds.
\end{lemma}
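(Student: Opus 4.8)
The plan is to unfold the definition of the under-approximating accelerator on both sides and reduce the claim to a statement about composing iteration counts, exploiting the bound property~\eqref{eq:bound}. Concretely, I want to show that for any states $\state,\state''$ with $\langle\state,\state''\rangle\in\sem{\uacc{\pi}\cdot\uacc{\pi}}$ we also have $\langle\state,\state''\rangle\in\sem{\uacc{\pi}}$, which is exactly $\sem{\uacc{\pi}\cdot\uacc{\pi}}\subseteq\sem{\uacc{\pi}}$, i.e.\ $\uacc{\pi}\cdot\uacc{\pi}\preceq\uacc{\pi}$.

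First I would take an arbitrary pair $\langle\state,\state''\rangle\in\sem{\uacc{\pi}\cdot\uacc{\pi}}=\sem{\uacc{\pi}}\circ\sem{\uacc{\pi}}$ and expose the intermediate state $\state'$ with $\langle\state,\state'\rangle\in\sem{\uacc{\pi}}$ and $\langle\state',\state''\rangle\in\sem{\uacc{\pi}}$. By Definition~\ref{def:acc}, the first membership yields some $i\leq\bound{\state}$ with $\langle\state,\state'\rangle\in\sem{\pi}^i$, and the second yields some $j\leq\bound{\state'}$ with $\langle\state',\state''\rangle\in\sem{\pi}^j$. Composing the two relational powers gives $\langle\state,\state''\rangle\in\sem{\pi}^{i}\circ\sem{\pi}^{j}=\sem{\pi}^{i+j}$, so the natural witness for the combined accelerator is the iteration count $i+j$.

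The key step is then to verify that this witness respects the bound, i.e.\ $i+j\leq\bound{\state}$. This is precisely where property~\eqref{eq:bound} earns its keep: instantiating it with the data $i\leq\bound{\state}$ and $\langle\state,\state'\rangle\in\sem{\pi}^i$ gives $\bound{\state'}\leq\bound{\state}-i$. Combining this with $j\leq\bound{\state'}$ yields $j\leq\bound{\state}-i$, hence $i+j\leq\bound{\state}$. Together with $\langle\state,\state''\rangle\in\sem{\pi}^{i+j}$, Definition~\ref{def:acc} gives $\langle\state,\state''\rangle\in\sem{\uacc{\pi}}$, completing the inclusion.

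I expect the only genuine subtlety to be the correct use of~\eqref{eq:bound}: one must feed it the \emph{first} segment's iteration count and starting state to obtain a usable upper bound on $\bound{\state'}$, rather than naively hoping the two local bounds add up. Everything else---unfolding the relational composition $\sem{\uacc{\pi}}\circ\sem{\uacc{\pi}}$ and using $\sem{\pi}^i\circ\sem{\pi}^j=\sem{\pi}^{i+j}$ from Table~\ref{tbl:trans}---is routine. A minor point worth stating explicitly is that $\bound{\state}\in\mathds{N}_0$ and $i,j\geq 0$, so all the inequalities live in the naturals and no negativity issues arise; this also means the degenerate cases $i=0$ or $j=0$ are handled uniformly by the same argument.
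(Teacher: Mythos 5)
Your proof is correct and follows essentially the same route as the paper's: expose the intermediate state $\state'$, extract witnesses $i\leq\bound{\state}$ and $j\leq\bound{\state'}$, invoke property~\eqref{eq:bound} to obtain $j\leq\bound{\state}-i$, and then present $i+j$ as the witness that $\langle\state,\state''\rangle\in\sem{\uacc{\pi}}$. The only cosmetic difference is that the paper first disposes of accelerators that are not strictly under-approximating as a trivial separate case, whereas your uniform argument via $\beta$ handles both cases at once, which is fine since the lemma's hypothesis guarantees the bound function exists.
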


A proof is provided in Appendix~\ref{sec:proofs}.
The following theorem states that the transformation
in Definition~\ref{def:acc_cfa} preserves the
reachability of states and never increases the reachability
diameter.

\begin{theorem}
\label{thm:red_rd}
Let $P$ be a CFA and $\acc{P}$ a corresponding
accelerated CFA as in Definition~\ref{def:acc_cfa}. Then 
the following claims hold:
\begin{enumerate}
\item Every trace in $P$ is subsumed by at least one trace in $\acc{P}$.
\item Let $\pi_1$ be an accelerated trace accepted by $\acc{P}$,
  and let $\langle\state_0,\state\rangle\in\sem{\pi_1}$. Then
  there exists a trace $\pi_2$ accepted by $P$ such that
  $\langle\state_0,\state\rangle\in\sem{\pi_2}$.
\end{enumerate}
\end{theorem}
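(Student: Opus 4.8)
\textbf{Claim~1} is immediate from the construction. By Definition~\ref{def:acc_cfa}, $\acc{P}$ is obtained from $P$ by \emph{adding} the non-branching paths $\trans{v_i}{v_i}{\acc{\pi}_i}$ and deleting nothing, so $E\subseteq\acc{E}$ and hence every trace in $P$ is also a trace in $\acc{P}$. Since trivially $\sem{\pi}\subseteq\sem{\pi}$, we have $\pi\preceq\pi$, and the witnessing trace in $\acc{P}$ is $\pi$ itself. I would state the inclusion $E\subseteq\acc{E}$ explicitly and regard the rest as routine.

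For \textbf{Claim~2} the plan is a \emph{substitution} argument that replaces every accelerated segment of $\pi_1$ by a concrete power of the corresponding looping trace. Because each added path $\trans{v_i}{v_i}{\acc{\pi}_i}$ is non-branching and returns to its head $v_i$, any traversal that enters an accelerator must run through it in full; hence an accepted accelerated trace factors as
\begin{displaymath}
  \pi_1 \;=\; \rho_0\cdot\acc{\pi}_{i_1}\cdot\rho_1\cdot\acc{\pi}_{i_2}\cdots\acc{\pi}_{i_m}\cdot\rho_m ,
\end{displaymath}
where each $\rho_j$ uses only edges of $P$ and each $\acc{\pi}_{i_\ell}$ is one accelerator traversal ($m\geq 1$ since $\pi_1$ is accelerated). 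Applying the composition law $\sem{\stmt_1\cdot\stmt_2}=\sem{\stmt_1}\circ\sem{\stmt_2}$ to the assumption $\langle\state_0,\state\rangle\in\sem{\pi_1}$ produces a chain of intermediate states along $\pi_1$; in particular the $\ell$-th accelerator traversal contributes an entry/exit pair $\langle\mu_\ell,\mu'_\ell\rangle\in\sem{\acc{\pi}_{i_\ell}}$.

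Next I would invoke the accelerator semantics \emph{pointwise}. By Definition~\ref{def:acc}, $\sem{\acc{\pi}_{i_\ell}}\equiv\sem{\pi_{i_\ell}}^*$ in the exact case, while for an under-approximating accelerator $\langle\mu_\ell,\mu'_\ell\rangle\in\sem{\uacc{\pi}_{i_\ell}}$ directly supplies an $n_\ell\in\mathds{N}_0$ with $\langle\mu_\ell,\mu'_\ell\rangle\in\sem{\pi_{i_\ell}}^{\,n_\ell}$; either way a witnessing exponent $n_\ell$ exists. I then replace $\acc{\pi}_{i_\ell}$ by the concrete power $\pi_{i_\ell}^{\,n_\ell}$ (with $\pi_{i_\ell}^{0}=\emptypath$, $\sem{\pi_{i_\ell}}^{0}=\id$). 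Since $\pi_{i_\ell}$ is looping with head $v_{i_\ell}$, the path $\trans{v_{i_\ell}}{v_{i_\ell}}{\pi_{i_\ell}}$ is a cycle in $P$, so $\pi_{i_\ell}^{\,n_\ell}$ is a genuine path in $P$ from $v_{i_\ell}$ back to $v_{i_\ell}$. Setting
\begin{displaymath}
  \pi_2 \;\defn\; \rho_0\cdot\pi_{i_1}^{\,n_1}\cdot\rho_1\cdots\pi_{i_m}^{\,n_m}\cdot\rho_m ,
\end{displaymath}
every factor is a path in $P$ and the vertices still line up, because each replacement shares the source and target $v_{i_\ell}$ of the accelerator it supersedes; thus $\pi_2$ is accepted by $P$. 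As $\langle\mu_\ell,\mu'_\ell\rangle\in\sem{\pi_{i_\ell}}^{\,n_\ell}$ for every $\ell$ and the relations of the $\rho_j$ are untouched, re-composing through the same intermediate states via $\sem{\stmt_1\cdot\stmt_2}=\sem{\stmt_1}\circ\sem{\stmt_2}$ yields $\langle\state_0,\state\rangle\in\sem{\pi_2}$, as required.

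The factorisation of $\pi_1$ and the re-assembly of the composition are routine. The one step needing genuine care is the \emph{pointwise} use of the accelerator semantics: the definitions give $\sem{\acc{\pi}_i}\equiv\sem{\pi_i}^*$ (and its under-approximate analogue) as set identities over \emph{all} state pairs, and I must extract, for the single pair $\langle\mu_\ell,\mu'_\ell\rangle$ actually realised by $\pi_1$, one concrete iteration count $n_\ell$—this is exactly where the existential in $\sem{\pi}^*=\exists i\,.\,\sem{\pi}^i$ (respectively the bounded existential of Definition~\ref{def:acc}) is consumed. I would also note that the bound condition~\eqref{eq:bound} plays no role here—it matters only for the diameter-reduction argument—so Claim~2 holds for arbitrary accelerators, exact or under-approximating.
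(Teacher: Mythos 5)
Your proof is correct and takes essentially the same route as the paper's: Claim~1 via the observation that $P$ is a sub-graph of $\acc{P}$ (so every trace of $P$ is a trace of $\acc{P}$ and subsumes itself), and Claim~2 via the same substitution argument, replacing each accelerator occurrence $\acc{\pi}_{i_\ell}$ in $\pi_1$ by a concrete power $\pi_{i_\ell}^{\,n_\ell}$ whose exponent is extracted from the existential in the accelerator semantics of Definition~\ref{def:acc}. Your write-up merely spells out the factorisation, the pointwise extraction of the exponents, and the recomposition that the paper's two-line proof leaves implicit.
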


\begin{proof}
  Part 1 of the theorem holds because $P$ is a sub-graph
  of $\acc{P}$. For the second
  part, assume that $\acc{\pi}_1,\ldots\acc{\pi}_k$ are
  the accelerators occurring in $\pi_1$. Then there
  are $i_1,\ldots,i_k\in\mathds{N}$ such that 
  $\pi_2\defn\pi_1[\pi_1^{i_1}/\acc{\pi}_1]\cdots[\pi_k^{i_k}/\acc{\pi}_k]$
  and $\langle\state_0,\state\rangle\in\sem{\pi_2}$.
\end{proof}

The diameter of a CFA
is determined by the longest of the shortest traces
from the initial state $\state_0$ to all 
reachable states~\cite{KroeningS03}. Accordingly,
the transformation in Definition~\ref{def:acc_cfa}
results in a reduction of the diameter if it
introduces a shorter accelerated trace that results
in the redundancy of this longest shortest trace.
In particular, acceleration may reduce an infinite
diameter to a finite one.

\section{Checking Safety with Trace Automata}
\label{sec:safety}

Bounded Model Checking owes its industrial success largely to its
effectiveness as a bug-finding technique. Nonetheless,
BMC can also be used to prove safety properties if the 
unwinding bound exceeds the reachability
diameter. In practice, however, the diameter 
can rarely be determined statically. Instead,
\emph{unwinding assertions} are used to detect looping traces that
become infeasible if expanded further~\cite{ckl2004}. Specifically, 
an unwinding assertion is a condition that fails for an
unwinding bound $k$ and a
trace $\pi_1\cdot \pi_2^k$ if $\pi_1\cdot \pi_2^{k+1}$ is feasible,
indicating that further iterations may be required to 
exhaustively explore the state space.

In the presence of accelerators, 
however, unwinding assertions are 
inefficient. Since $\acc{\pi}\cdot\acc{\pi}\preceq\acc{\pi}$
(Lemma~\ref{lem:subsume_iteration}),
repeated iterations of accelerators are redundant. The unwinding
assertion for $\pi_1\cdot\acc{\pi}_2$, however, fails if
$\pi_1\cdot\acc{\pi}_2\cdot\acc{\pi}_2$ is feasible. Accordingly,
the approximate diameter as determined by means of unwinding
assertions for an accelerated program $\acc{P}$
is the \emph{same} as for the corresponding non-accelerated
program $P$. 

In the following, we present a technique that remedies
the deficiency of unwinding assertions in the presence
of accelerators by \emph{restricting} the language
accepted by a CFA.

\begin{definition}[Restriction Language]
  \label{def:restr_lang}
  Let $\acc{P}$ an accelerated CFA for $P$ over the vocabulary
  $\AccPStmts$. For each accelerator 
  $\acc{\pi}\in\AccPStmts^+$, 
  let $\pi\in\PStmts^+$ be the corresponding looping trace.
  The \emph{restriction language} $\lang_R$ for $\acc{P}$ 
  comprises all traces with a sub-trace characterised
  by the regular expression
  $(\pi\,\vert\,(\acc{\pi}\cdot\acc{\pi}))$ for all
  accelerators $\acc{\pi}$ in $\acc{P}$ with $\pi\preceq\acc{\pi}$.
\end{definition}

The following lemma enables us to eliminate traces
of an accelerated CFA $\acc{P}$ that are in 
the restriction language $\lang_R$.

\begin{lemma}
  \label{lem:restr_trace}
  Let $\acc{P}$ be an accelerated CFA, and
  $\lang_R$ be the corresponding restriction language.
  Let $\pi_1$ be a trace accepted by $\acc{P}$ such
  that $\pi_1\in\lang_R$. 
  Then there exists a trace $\pi_2$ which is accepted by $\acc{P}$
  such that $\pi_1\preceq\pi_2$ and $\pi_1$ is not a sub-trace
  of~$\pi_2$.
\end{lemma}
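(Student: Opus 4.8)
The plan is to prove Lemma~\ref{lem:restr_trace} by exhibiting an explicit construction: given a trace $\pi_1 \in \lang_R$, I will rewrite one of its offending sub-trace occurrences into a strictly shorter sub-trace that subsumes it, yielding a $\pi_2$ with $\pi_1 \preceq \pi_2$ and $|\pi_2| < |\pi_1|$, which immediately gives that $\pi_1$ is not a sub-trace of $\pi_2$. By Definition~\ref{def:restr_lang}, since $\pi_1 \in \lang_R$, the trace $\pi_1$ contains a sub-trace matching $(\pi \,\vert\, (\acc{\pi}\cdot\acc{\pi}))$ for some accelerator $\acc{\pi}$ with $\pi \preceq \acc{\pi}$. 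I would split into the two cases corresponding to the two alternatives of this regular expression.

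In the first case, $\pi_1 = \alpha \cdot \pi \cdot \gamma$ where $\pi$ is the looping trace underlying some accelerator $\acc{\pi}$ in $\acc{P}$. The idea is to replace the occurrence of $\pi$ by $\acc{\pi}$, setting $\pi_2 \defn \alpha \cdot \acc{\pi} \cdot \gamma$. First I would argue that $\pi_2$ is still accepted by $\acc{P}$: the accelerated CFA (Definition~\ref{def:acc_cfa}) contains the non-branching path $\trans{v_j}{v_j}{\acc{\pi}}$ at the same head vertex $v_j$ at which the looping trace $\pi$ begins and ends, so substituting $\pi$ by $\acc{\pi}$ preserves the endpoints of the sub-path and hence acceptance. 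Then subsumption $\pi_1 \preceq \pi_2$ follows by monotonicity of relational composition: since $\pi \preceq \acc{\pi}$ (noted after Definition~\ref{def:acc_cfa}), we have $\sem{\pi_1} = \sem{\alpha}\circ\sem{\pi}\circ\sem{\gamma} \subseteq \sem{\alpha}\circ\sem{\acc{\pi}}\circ\sem{\gamma} = \sem{\pi_2}$. In the second case, $\pi_1 = \alpha \cdot \acc{\pi}\cdot\acc{\pi} \cdot \gamma$, and I would take $\pi_2 \defn \alpha \cdot \acc{\pi} \cdot \gamma$, contracting the doubled accelerator into a single one. Acceptance is again preserved because the collapsed path runs from $v_j$ back to $v_j$, and subsumption follows from Lemma~\ref{lem:subsume_iteration}, which gives $\acc{\pi}\cdot\acc{\pi} \preceq \acc{\pi}$ (stated there for the under-approximating case $\uacc{\pi}$; the exact case $\sem{\acc{\pi}}\equiv\sem{\pi}^*$ satisfies $\sem{\pi}^*\circ\sem{\pi}^* = \sem{\pi}^*$ directly).

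In both cases the replacement strictly decreases length: in case one $|\pi_2| = |\pi_1| - |\pi| + |\acc{\pi}|$, and in case two $|\pi_2| = |\pi_1| - |\acc{\pi}|$. The second case is unambiguously shorter since $|\acc{\pi}| \geq 1$. For the first case, however, one cannot in general assume $|\acc{\pi}| < |\pi|$—an accelerator may be syntactically longer than the single loop body it summarises—so the clean "strictly shorter" argument need not hold, and this is the main obstacle. What the lemma actually requires is only that $\pi_1$ is \emph{not a sub-trace} of $\pi_2$, which is weaker than $|\pi_2| < |\pi_1|$. I would therefore argue the non-sub-trace property structurally rather than by length: the statement $\pi_1$ cannot embed into $\pi_2$ because the substitution removes an occurrence of $\pi$ (respectively of the pattern $\acc{\pi}\cdot\acc{\pi}$) and the accelerators, carrying their distinguishing auxiliary statements such as the iteration counter $i:=*$, are syntactically distinct from the plain loop-body statements in $\PStmts$; hence no embedding of the longer redundant pattern can survive in the contracted trace. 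The delicate point to get right is precisely this syntactic disjointness of the accelerator alphabet from $\PStmts$ (observed in the Example after Definition~\ref{def:acc}), which guarantees that the eliminated redundant sub-trace does not reappear elsewhere in $\pi_2$.
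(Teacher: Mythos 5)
Your proposal is correct and takes essentially the same route as the paper's proof: the same case split on the two alternatives of the regular expression, replacing an occurrence of $\pi$ by $\acc{\pi}$ (using $\pi\preceq\acc{\pi}$ and monotonicity of relational composition) in the first case, and contracting $\acc{\pi}\cdot\acc{\pi}$ to $\acc{\pi}$ via Lemma~\ref{lem:subsume_iteration} in the second. If anything, you are more careful than the paper on the non-sub-trace claim: the paper merely remarks that $\acc{\pi}$ ``differs'' from the sub-trace it replaces in case (a) and invokes $\vert\pi_2\vert<\vert\pi_1\vert$ only in case (b), whereas you correctly observe that a pure length argument fails in case (a) and ground the claim in the syntactic distinctness of accelerator statements from \PStmts.
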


A proof by case split is provided in Appendix~\ref{sec:proofs}.
Using Lemma~\ref{lem:restr_trace} and induction over the 
number of traces and accelerators, it is admissible to eliminate all traces 
accepted by $\acc{P}$ and contained in $\lang_R$ without affecting
the reachability of states:

\begin{theorem}
  \label{thm:restr_cfa}
  Let $\lang_{\acc{P}}$ be the language comprising all traces
  accepted by an accelerated CFA $\acc{P}$ and $\lang_R$ be
  the corresponding restriction language. Then
  every trace $\pi\in\lang_{\acc{P}}$ is subsumed by
  the traces in $\lang_{\acc{P}}\setminus\lang_R$.
\end{theorem}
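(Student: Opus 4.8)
The plan is a well-founded induction that repeatedly applies Lemma~\ref{lem:restr_trace} to collapse the forbidden sub-traces of a trace one at a time. Let $J\defn\{j:\pi_j\preceq\acc{\pi}_j\}$ be the set of accelerators that contribute to $\lang_R$. For a trace $\pi\in\lang_{\acc{P}}$ I would take as measure the pair $\mu(\pi)\defn\langle n_\pi,n_{\acc{\pi}}\rangle$, ordered lexicographically, where $n_\pi$ counts the occurrences of some looping trace $\pi_j$ ($j\in J$) as a sub-trace of $\pi$, and $n_{\acc{\pi}}$ counts the occurrences of some accelerator $\acc{\pi}_j$ ($j\in J$). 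Since $\langle n_\pi,n_{\acc{\pi}}\rangle$ ranges over $\mathds{N}_0\times\mathds{N}_0$ under the lexicographic order, this order is well-founded and induction on $\mu$ is sound. Recall that, by the set-extension of $\preceq$, the claim is precisely $\lang_{\acc{P}}\preceq(\lang_{\acc{P}}\setminus\lang_R)$, that is, $\bigcup_{\pi\in\lang_{\acc{P}}}\sem{\pi}\subseteq\bigcup_{\pi'\in\lang_{\acc{P}}\setminus\lang_R}\sem{\pi'}$; it therefore suffices to show that every individual $\pi\in\lang_{\acc{P}}$ is subsumed by a single trace in $\lang_{\acc{P}}\setminus\lang_R$.

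I would carry out the induction on $\mu(\pi)$ as follows. If $\pi\notin\lang_R$, then $\pi$ itself lies in $\lang_{\acc{P}}\setminus\lang_R$ and $\{\pi\}\preceq\{\pi\}$ by reflexivity of $\preceq$, so we are done. If $\pi\in\lang_R$, then $\pi$ contains a sub-trace matching $(\pi_j\,\vert\,(\acc{\pi}_j\cdot\acc{\pi}_j))$ for some $j\in J$, and Lemma~\ref{lem:restr_trace} supplies a trace $\pi_2$ accepted by $\acc{P}$ with $\pi\preceq\pi_2$ in which exactly this forbidden sub-trace has been collapsed---a single $\pi_j$ rewritten to $\acc{\pi}_j$ (using $\pi_j\preceq\acc{\pi}_j$), or $\acc{\pi}_j\cdot\acc{\pi}_j$ rewritten to $\acc{\pi}_j$ (using Lemma~\ref{lem:subsume_iteration}). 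Assuming $\mu(\pi_2)<\mu(\pi)$ (discussed below), the induction hypothesis yields some $\pi_3\in\lang_{\acc{P}}\setminus\lang_R$ with $\pi_2\preceq\pi_3$; since $\preceq$ is transitive (inherited from transitivity of $\subseteq$), $\pi\preceq\pi_3$, which closes the step. Unfolding the set-extension then gives the theorem, and the reverse inclusion is trivial because $\lang_{\acc{P}}\setminus\lang_R\subseteq\lang_{\acc{P}}$, so the reachable states are preserved exactly.

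The step I expect to require the most care is the strict decrease $\mu(\pi_2)<\mu(\pi)$, which rests on the replacement being a single local rewrite together with the alphabet discipline of accelerators. Collapsing $\acc{\pi}_j\cdot\acc{\pi}_j$ to $\acc{\pi}_j$ touches only accelerator statements, so it leaves $n_\pi$ unchanged and strictly lowers $n_{\acc{\pi}}$; rewriting $\pi_j$ to $\acc{\pi}_j$ strictly lowers $n_\pi$, and although it may raise $n_{\acc{\pi}}$, the lexicographic order still decreases. The delicate point is that neither rewrite may silently introduce new occurrences of some $\pi_{j'}$: here I would invoke the fact (noted after the Example) that an accelerator is written over statements outside $\PStmts$, whereas every $\pi_{j'}$ consists solely of $\PStmts$-statements, so no occurrence of $\pi_{j'}$ can straddle the freshly inserted accelerator and a rewrite can only delete, never create, $\pi_{j'}$-occurrences. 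Finally I would record two routine supporting facts used above: that $\preceq$ is a congruence for concatenation, which follows from the monotonicity of relational composition $\circ$ under $\subseteq$ (Table~\ref{tbl:trans}) and lets me lift $\pi_j\preceq\acc{\pi}_j$ and $\acc{\pi}_j\cdot\acc{\pi}_j\preceq\acc{\pi}_j$ through the surrounding context; and that sub-trace occurrences are well defined because statement occurrences are distinguishable by their program location, so the collapsed position is uniquely identified.
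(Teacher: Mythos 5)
Your proof is correct and follows essentially the same route as the paper: the paper dispatches this theorem in a single sentence, invoking Lemma~\ref{lem:restr_trace} together with ``induction over the number of traces and accelerators,'' which is exactly the well-founded induction you carry out. The details you supply---the lexicographic measure $\langle n_\pi, n_{\acc{\pi}}\rangle$, the use of the lemma's \emph{constructive} local rewrite (rather than its bare statement, which alone would not yield a decreasing measure), and the alphabet/location argument that rewrites cannot create fresh forbidden sub-traces---are precisely what the paper leaves implicit, and they are sound.
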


Notably, Definition~\ref{def:restr_lang} explicitly 
excludes accelerators $\acc{\pi}$ that do not 
satisfy $\pi\preceq\acc{\pi}$, a requirement
that is therefore implicitly present in
Lemma~\ref{lem:restr_trace} as well as 
Theorem~\ref{thm:restr_cfa}. The rationale
behind this restriction is that 
strictly under-approximating accelerators $\uacc{\pi}$
do not necessarily have this property. 
However, even if $\uacc{\pi}$ does not subsume 
$\pi$ in general, we can characterize
the set of starting states in which it does:
\begin{equation}
  \left\{\state\,\vert\,
    \langle\state,\state'\rangle\in\sem{\pi}
  \Rightarrow
  \langle\state,\state'\rangle\in\sem{\uacc{\pi}}\right\}
\end{equation}

In order to determine whether a looping path $\pi$
is redundant, we presume for each
accelerated looping trace $\pi$ the existence of a 
predicate $\predovfl{\pi}\in\Exprs$ and an assumption
statement $\stmtnoovfl{\pi}\defn[\predovfl{\pi}]$ such that
\begin{equation}
  \sem{\,\stmtnoovfl{\pi}\,}\defn
  \left\{\langle\state,\state\rangle\vert\,
    \langle\state,\state'\rangle\in\sem{\pi}
  \Rightarrow
  \langle\state,\state'\rangle\in\sem{\uacc{\pi}}\right\}
\end{equation}

Analogously, we can define the dual statement
$\stmtovfl{\pi}\defn[\neg\predovfl{\pi}]$. 
Though both $\sem{\stmtnoovfl{\pi}}$ and $\sem{\stmtovfl{\pi}}$ 
are non-total transition relations, their combination 
$\sem{\stmtnoovfl{\pi}}\cup\sem{\stmtovfl{\pi}}$ is total. 
Moreover, it does not modify the state, i.e.,
$\sem{\stmtnoovfl{\pi}}\cup\sem{\stmtovfl{\pi}}
\equiv\sem{\skipstmt}$. 
It is therefore evident
that replacing the head $v$ of a looping trace $\pi$
with the sub-graph $\fork$ (and reconnecting the
incoming and outgoing edges of $v$ to $u$ and $w$,
respectively) preserves the reachability of states.
It does, however change the traces of the CFA. After
the modification, the looping traces 
$\stmtnoovfl{\pi}\cdot\pi$ and 
$\stmtovfl{\pi}\cdot\pi$ replace~$\pi$. 
By definition of $\stmtnoovfl{\pi}$,
we have $\stmtnoovfl{\pi}\cdot\pi\preceq\uacc{\pi}$. 
Consequently, if we accelerate
the newly introduced looping trace $\stmtnoovfl{\pi}\cdot\pi$, 
Definition~\ref{def:restr_lang} and therefore
Lemma~\ref{lem:restr_trace} as well as 
Theorem~\ref{thm:restr_cfa} apply.

The discriminating statement
$\stmtovfl{\pi}$ for the example path ${\tt x}:={\tt x}+1$
at the end of Section~\ref{sec:preliminaries}, for instance,
detects the presence of an overflow. For this specific
example, $\stmtovfl{\pi}$ is the assumption $[{\tt x}=2^{32}-1]$.
In practice, however, the bit-level-accurate encoding of \CBMC
provides a mechanism to detect an overflow \emph{after} 
it happened. Therefore, we introduce statements 
$\stmtovfl{\pi}\defn[\mathsf{overflow}({\tt x})]$ 
and $\stmtnoovfl{\pi}\defn[\neg\mathsf{overflow}({\tt x})]$ 
that determine the presence of an overflow
at the end of the looping trace. The modification and correctness
argument for this construction is analogous to the one above.

In order to recognize redundant traces,
we use a \emph{trace automaton} that accepts the restriction 
language $\lang_R$. 

\begin{definition}[Trace Automaton]
  \label{def:trace_automaton}
  A trace automaton $T_R$ for $\lang_R$ is a deterministic
  finite automaton (DFA) over the alphabet
  $\Stmts_{\acc{P}}$ that accepts $\lang_R$.
\end{definition}

Since $\lang_R$ is regular, so is its complement $\overline{\lang}_R$.
In the following, we describe an instrumentation of a CFA $\acc{P}$ which 
guarantees that every trace accepted by $T_R$ and $\acc{P}$ becomes infeasible.
To this end, we construct a DFA $T_R$ recognising $\lang_R$,
starting out with an $\epsilon$-NFA which we then determinise
using the subset construction~\cite{dragonbook}. While
this yields (for a CFA with $k$ statements) a DFA with $O(2^k)$ 
states in the worst case, in practice the DFAs generated are much
smaller.

We initialise the set the vertices of the instrumented CFA $\tilde{P}$ 
to the vertices of $\acc{P}$. We inline $T_R$ by
creating a fresh integer variable ${\tt g}$ in $\tilde{P}$
which encodes the state of $T_R$ and is initialised 
to $0$. For each edge $\trans{u}{v}{s} \in \acc{P}$, we consider all 
transitions $\trans{n}{m}{s} \in T_R$.  If there are no such transitions,
we copy the edge $\trans{u}{v}{s}$ into $\tilde{P}$. Otherwise,
we add edges as follows:
\begin{itemize}
 \item If $m$ is an accepting state, we do not add an edge to $\tilde{P}$.
 \item Otherwise, construct a new statement $l \defn [{\tt g} = n];
   {\tt g}:= m;\; s$ and
  add the path $\trans{u}{v}{l}$ to $\tilde{P}$, which simulates the
  transition $\trans{n}{m}{s}$.
\end{itemize}

Since we add at most one edge to $\tilde{P}$ for each transition in $T_R$,
this construction's time and space complexity are both
$\Theta(\| \acc{P} \| + \| T_R \|)$.
By construction, if a trace $\pi$ accepted by CFA $\tilde{P}$ 
projected to $\AccPStmts$ is contained in the restriction 
language $\lang_R$, then $\pi$ is infeasible.
Conceptually, our construction
suppresses traces accepted by $\lang_R$ and retains the 
remaining executions.

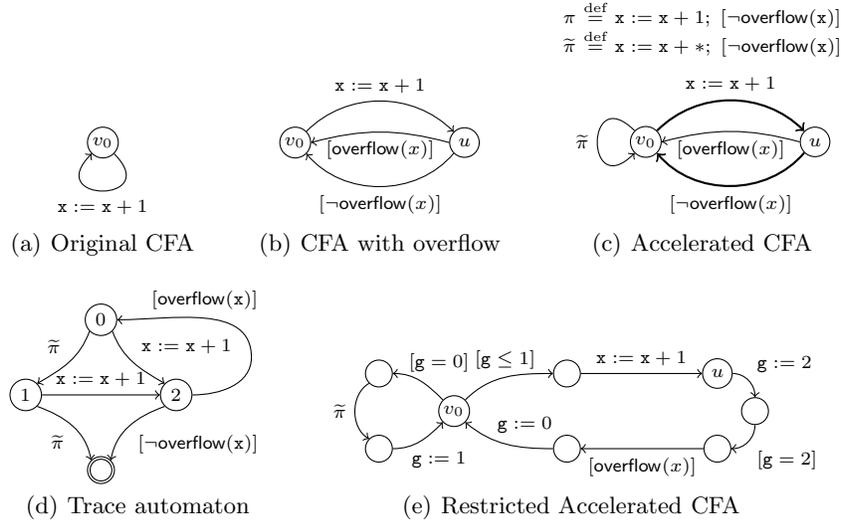
\begin{figure}[t]\centering
  \subfigure[Original CFA\label{fig:orig_cfa}]{
    \begin{minipage}[b]{.25\textwidth}\centering
      \begin{tikzpicture}
        \node at (0,0) (first) [draw,circle,fill=white,text width=8pt,align=center,inner sep=1pt] {\scriptsize $v_0$};
        \draw[->] (first.south east) to [out=-45,in=-135,looseness=8] 
        node[below] {\scriptsize ${\tt x}:={\tt x}+1$}
        (first.south west);
      \end{tikzpicture}
    \end{minipage}
  } 
  \subfigure[CFA with overflow\label{fig:discr_cfa}]{
    \begin{minipage}[b]{.3\textwidth}\centering
      \begin{tikzpicture}
        \node at (0,0) (first) [draw,circle,fill=white,text width=8pt,align=center,inner sep=1pt] {\scriptsize $v_0$};
        \node at (2.25cm,0cm) (second) [draw,circle,fill=white,text width=8pt,align=center,inner sep=1pt] {\scriptsize $u$};
        \draw[->] (first.north east) to [out=45,in=135] 
        node[above] {\scriptsize ${\tt x}:={\tt x}+1$}
        (second.north west);
        \draw[->] (second.west) to [out=165, in=15]
        node[below] {\scriptsize $[\mathsf{overflow}(x)]$}
        (first.east);
        \draw[->] (second.south west) to [out=230,in=-50]
        node[below] {\scriptsize $[\neg\mathsf{overflow}(x)]$}
        (first.south east);
      \end{tikzpicture}
    \end{minipage}
  } 
  \subfigure[Accelerated CFA\label{fig:acc_cfa}]{
    \begin{minipage}[b]{.35\textwidth}\centering
      \begin{tikzpicture}
        \node at (0,0) (first) [draw,circle,fill=white,text width=8pt,align=center,inner sep=1pt] {\scriptsize $v_0$};
        \node at (2.25cm,0cm) (second) [draw,circle,fill=white,text width=8pt,align=center,inner sep=1pt] {\scriptsize $u$};
        \draw[->,thick] (first.north east) to [out=45,in=135] 
        node[above] {\scriptsize ${\tt x}:={\tt x}+1$}
        (second.north west);
        \draw[->] (second.west) to [out=165, in=15]
        node[below] {\scriptsize $[\mathsf{overflow}(x)]$}
        (first.east);
        \draw[->,thick] (second.south west) to [out=230,in=-50]
        node[below] {\scriptsize $[\neg\mathsf{overflow}(x)]$}
        (first.south east);
        \draw[->] (first.north west) to [out=135,in=225,looseness=8] 
        node[left] {\scriptsize $\uacc{\pi}$}
        (first.south west);
        \node at (.75,1.5cm) {
          \scriptsize$\begin{array}{ccl}
            \pi & \defn & {\tt x}:={\tt x}+1;\;[\neg\sf{
                overflow}({\tt x})]\\
            \uacc{\pi} & \defn & {\tt x}:={\tt x}+*;\;[\neg\sf{
                overflow}({\tt x})]\\
          \end{array}$
        };
      \end{tikzpicture}
    \end{minipage}
  }\\[1ex]
  \subfigure[Trace automaton\label{fig:trace_aut}]{
    \begin{minipage}[b]{.3\textwidth}\centering
      \begin{tikzpicture}
        \node at (0,0) (first) [draw,circle,fill=white,text
          width=8pt,align=center,inner sep=1pt] {\scriptsize $0$};
        \node at (-1cm,-1cm) (second) [draw,circle,fill=white,text
          width=8pt,align=center,inner sep=1pt] {\scriptsize $1$};
        \node at (1cm,-1cm) (third) [draw,circle,fill=white,text
          width=8pt,align=center,inner sep=1pt] {\scriptsize $2$};
        \node at (0cm,-2cm) (final) [draw,circle,fill=white,text
          width=8pt,align=center,inner sep=1pt] {};
        \node at (0cm,-2cm) [draw,circle,fill=white,text
          width=6pt,align=center,inner sep=1pt] {};
        
        \draw[->] (second.east)--node[above] {\scriptsize ${\tt x}:={\tt x}+1$}
        (third.west);
        \draw[->] (first.south west) to [out=250,in=30]
        node[above left] {\scriptsize $\uacc{\pi}$}
        (second.north east);
        \draw[->] (first.south east) to [out=290,in=150]
        node[above right] {\scriptsize ${\tt x}:={\tt x}+1$}
        (third.north west);
        \draw[->] (second.south east) to [out=-20,in=100]
        node[below left] {\scriptsize $\uacc{\pi}$}
        (final.north west);
        \draw[->] (third.south west) to [out=200,in=80]
        node[below right] {\scriptsize $[\neg\sf{overflow}({\tt x})]$}
        (final.north east);
        \draw[->] (third.east) to [out=0,in=270] (2cm,-.5cm) to
             [out=90,in=0]  node[above] {\scriptsize$[\sf{overflow}({\tt x})]$}
             (first.east);
             
      \end{tikzpicture}
    \end{minipage}
  } 
  \subfigure[Restricted Accelerated CFA \label{fig:restrict_cfa}]{
    \begin{minipage}[b]{.6\textwidth}\centering
      \begin{tikzpicture}
        \node at (0,0) (first) [draw,circle,fill=white,text width=8pt,align=center,inner sep=1pt] {\scriptsize $v_0$};
        \node at (3.5cm,0.5cm) (second) [draw,circle,fill=white,text
          width=8pt,align=center,inner sep=1pt] {\scriptsize $u$};
        \node at (3.5cm,-0.5cm) (extra) [draw,circle,fill=white,text
          width=8pt,align=center,inner sep=1pt] {};
        \node at (4cm,0cm) (more) [draw,circle,fill=white,text
          width=8pt,align=center,inner sep=1pt] {};
        \node at (1.5cm,.5cm) (inter) [draw,circle,fill=white,text width=8pt,align=center,inner sep=1pt] {};
        \node at (1.5cm,-.5cm) (third) [draw,circle,fill=white,text width=8pt,align=center,inner sep=1pt] {};
        \draw[->] (inter.east) to [out=0,in=180] 
        node[above] {\scriptsize ${\tt x}:={\tt x}+1$}
        (second.west);
        \draw[->] (first.north east) to [out=45,in=180] 
        node[above] {\scriptsize $[{\tt g}\leq 1]$}
        (inter.west);
        \draw[->] (extra.west) to [out=180,in=0]
        node[below] {\scriptsize $[\mathsf{overflow}(x)]$}
        (third.east);
        \draw[->] (third.west) to [out=180,in=-40]
        node[shift={(.25cm,.25cm)}] {\scriptsize ${\tt g}:=0$}
        (first.south east);
        \node at (-1cm,.5cm) (a) [draw,circle,fill=white,text
          width=8pt,align=center,inner sep=1pt] {};
        \node at (-1cm,-.5cm) (d) [draw,circle,fill=white,text
          width=8pt,align=center,inner sep=1pt] {};
        \draw[->] (first.north west) to [out=135,in=0]
        node[shift={(.25cm,.25cm)}] {\scriptsize $[{\tt g}=0]$}
        (a.east);
        \draw[->] (d.east) to [out=0,in=225]
        node[shift={(.25,-.25cm)}] {\scriptsize ${\tt g}:=1$}
        (first.south west);
        \draw[->] (a.south west) to [out=210, in=150]
        node[left] {\scriptsize $\uacc{\pi}$}
        (d.north west);
        \draw[->] (second.east) to [out=0, in=90]
        node[above right] {\scriptsize ${\tt g}:=2$}
        (more.north);
        \draw[->] (more.south) to [out=270, in=0]
        node[below right] {\scriptsize $[{\tt g}=2]$}
        (extra.east);
      \end{tikzpicture}
    \end{minipage}
  } 
  \caption{Accelerating a looping path\label{fig:instrument}}
\end{figure}

An example is shown in Figure~\ref{fig:instrument}.  The CFA
in Figure~\ref{fig:orig_cfa} represents an unaccelerated loop with a single path
through its body.  After adding an extra path to account for integer overflow,
we arrive at the CFA in Figure~\ref{fig:discr_cfa}.  We are able to find an accelerator
for the non-overflowing path, which we add to the CFA resulting in 
Figure~\ref{fig:acc_cfa}. 
We use $\uacc{\pi}$ to represent the accelerator $\pi$ for the
corresponding path. Then the restriction language is represented by
the regular expression $(\pi\,\vert\,\uacc{\pi}\cdot\uacc{\pi})$.
The corresponding 4-state trace automaton is shown in
Figure~\ref{fig:trace_aut}. By combining
the trace automaton and the CFA  we obtain the restricted CFA in 
Figure~\ref{fig:restrict_cfa} (after equivalent paths have been collapsed).

In the restricted CFA $\tilde{P}$, looping traces $\pi$
that can be accelerated and redundant iterations of
accelerators are infeasible and therefore do not trigger
the failure of unwinding assertions.
A CFA is safe if all unwinding assertions hold and 
no safety violation can be detected for a given bound~$k$. The reduction
of the diameter achieved by acceleration (Section~\ref{sec:reduction})
in combination with the construction presented in
this section enables us to establish the safety of CFAs
in cases in which traditional BMC would have been unable to do so.
Section~\ref{sec:experiments} provides
an experimental evaluation demonstrating the viability of
our approach.

\section{Experimental Evaluation}
\label{sec:experiments}

We evaluate the effect of instrumenting accelerated programs
with trace automata and determine
the direct cost of constructing the automata as well as
the impact of trace automata on the ability to find bugs on the one hand
and prove safety on the other.

Our evaluation is based on the {\sc Loops} category of the benchmarks
from \SVCOMP and a number of small but difficult hand-crafted
examples.
Our hand-crafted examples require precise reasoning
about arithmetic and arrays.  The unsafe examples have deep bugs,
and the safe examples feature unbounded loops.
The \SVCOMP benchmarks are largely arithmetic in nature. They
often require non-trivial arithmetic invariants to be inferred, but rarely
require complex reasoning about arrays.  Furthermore, all bugs of
the unsafe \SVCOMP benchmarks occur within a small number of loop
iterations.

In all of our experiments we used \CBMC taken from the public SVN
at r3849 to perform the transformation. Since \CBMC's
acceleration procedure generates assertions with quantified arrays,
we used \textsc{Z3}~\cite{z3}
version 4.3.1 as the backend decision procedure.
All of the experiments were performed with a timeout of 30s and
very low unwinding limits.
We used an unwinding limit of 100 for unaccelerated programs
and an unwinding limit of 3 for their accelerated counterparts.

The version of \CBMC we use has incomplete acceleration
support, e.g., it is unable to accelerate nested loops.
As a result, there are numerous benchmarks that it cannot accelerate.
We stress that our goal here is to evaluate the effect of adding trace
automata to accelerated programs.  Acceleration has already proven to be a
useful technique for both bug-finding and
proof~\cite{KroeningLW13,DBLP:conf/sas/SchrammelJ11,DBLP:conf/pts/SchrammelMK13,philipp,KroeningW06}
and we are interested in how well inlined trace automata can complement it.

Our experimental results are summarised in Table~\ref{tbl:results-summary},
and the full results are shown in Appendix~\ref{sec:detailed_experiments}.
We discuss the results in the remainder of this section.

\begin{table}[t]\centering
\caption{Summary of experimental results\label{tbl:results-summary}}
\begin{tabular}{|l|c||c|r||c||c|r|r||c|r|r|}
\hhline{~----------}
\multicolumn{1}{c|}{~}
 & & \multicolumn{2}{|c||}{\CBMC} & &
\multicolumn{3}{c||}{\shortstack{\CBMC\\ +\\ Acceleration}} &
\multicolumn{3}{c|}{\shortstack{\CBMC + \\Acceleration + \\Trace
    Automata}} \\
\multicolumn{1}{c|}{~}
 & \rotatebox{90}{\#Benchmarks} & \rotatebox{90}{\#Correct} &
\multicolumn{1}{c||}{\rotatebox{90}{Time(s)}} &
\rotatebox{90}{\shortstack{\#Benchmarks \\[-.5ex] accelerated}}  &
\rotatebox{90}{\#Correct} & \rotatebox{90}{\shortstack{Acceleration
    \\[-.5ex] Time (s)}} & \rotatebox{90}{\shortstack{Checking \\ time
    (s)}}  & \rotatebox{90}{\#Correct} &
\rotatebox{90}{\shortstack{Acceleration \\ Time (s)}} &
\rotatebox{90}{\shortstack{Checking \\ Time (s)}} \\
 \hline
 \SVCOMP safe & 35 & 14 & 298.73 & 21 & 2 & 23.24 & 244.72 & 14 & 23.86 & 189.61 \\
 \SVCOMP unsafe & 32 & 20 & 394.96 & 18 & 11 & 15.79 & 197.94 & 12 & 16.51 & 173.74 \\
 Crafted safe & 15 & 0 & 11.42 & 15 & 0 & 2.75 & 32.41 & 15 & 2.91 & 1.59 \\
 Crafted unsafe & 14 & 0 & 9.03 & 14 & 14 & 2.85 & 12.24 & 14 & 2.95 & 2.55 \\
 \hline
\end{tabular}
\end{table}

\paragraph{Cost of Trace Automata.}

To evaluate the direct cost of constructing the trace automata, we
direct the reader's attention to Table~\ref{tbl:results-summary} and
the columns headed ``acceleration time''.
The first ``acceleration time'' column shows
how long it took to generate an accelerated program without a trace automaton,
whereas the second shows how long it took when a trace automaton was included.
For all of these benchmarks, the additional time taken to build and insert the trace
automaton is negligible.  The ``size increase'' column in 
Tables~\ref{tbl:results_safe}, \ref{tbl:results_unsafe}, and 
\ref{tbl:results_crafted} in Appendix~\ref{sec:detailed_experiments}
shows how much larger the instrumented binary is than the 
accelerated binary, expressed
as a percentage of the accelerated binary's size. The average
increase is about 15\%, but the maximum increase is 77\%.  There is
still room for optimisation, as we do not minimise the
automata before inserting them.

\paragraph{Bug Finding.}

In the following, 
we evaluate the effectiveness of our technique for bug finding.  The
current state-of-the-art method for bug finding is BMC~\cite{svcomp14}.
To provide a baseline for bug finding power, we
start by evaluating the effect of just combining acceleration with BMC.
We then evaluate the impact of adding trace automata, as
compared to acceleration without trace automata.  Our hypothesis is that
adding trace automata has negligible impact on acceleration's ability to
find bugs.  The statistics we use to measure these effects are the number of
bugs found and the time to find them.  We measure these statistics for each
of three techniques: BMC alone, acceleration with BMC, and our combination
of acceleration, trace automata and BMC.

The results are summarised in Table~\ref{tbl:results-summary}.
In \SVCOMP, almost all of the bugs occur after a small number of unwindings.
In these cases, there are no deep loops to accelerate so just using \CBMC allows
the same bugs to be reached, but without the overhead of acceleration (which
causes some timeouts to be hit).  In the
crafted set the bugs are much deeper, and we can see the effect of acceleration in
discovering these bugs -- none of the bugs are discovered by \CBMC, but each of the
configurations using acceleration finds all 14 bugs.

In both of the benchmark sets, adding trace automata does not negatively impact the bug finding ability
of acceleration.  Indeed, for the crafted set the addition of trace automata significantly
improves bug finding performance -- the total time needed to find the 14 bugs is reduced
from 12.31s to 1.85s.

\paragraph{Safety Proving.}

We evaluate the effectiveness of our technique for proving safety, the
key contribution of this paper.  Our two benchmark sets have very different
characteristics with respect to the safety proofs required for their safe
examples.  As can be seen from Table~\ref{tbl:results-summary}, 14 of the \SVCOMP
benchmarks can be proved safe using just BMC.  That is,
they can be exhaustively proved safe after a small number of loop unwindings.
For the 14 cases that were provable using just BMC, none
had loops that could execute for more than 10 iterations.

Of the 35 safe \SVCOMP benchmarks, 21 contained loops that could be accelerated.
Of these 21 cases, 14 were proved safe using trace automata.  These are not the
same 14 cases that were proved by \CBMC, and notably 8 cases with
unbounded loops are included, which would be impossible to prove safe with just BMC.
Additionally we were able to solve the \textsc{sum\_array\_true}
benchmark (shown in Fig.~\ref{fig:sumarray}) in 1.75s.  Of all the tools entered in
\SVCOMP, the only tools to claim ``safe'' for this benchmark were BMC-based,
and as such do not generate safety proofs.

For the 7 cases where accelerators were produced but we were unable to prove safety,
5 are due to timeouts, 1 is a crash in \CBMC and 1 is an ``incomplete''.  The 5 timeouts
are due to the complexity of the SMT queries we produce.  For these timeout cases, we
generate assertions which contain non-linear multiplication and quantification over arrays,
which are very difficult for Z3 to solve.  The ``incomplete'' case (\textsc{trex03\_true})
requires reasoning about accelerated paths that commute with each other, which we leave
as future work.

\begin{figure}[t]\centering
  \fbox{\begin{minipage}{.9\columnwidth}
      \begin{tabbing}
        \qquad\=\qquad\=\qquad\=\qquad\=\kill
        ${\tt unsigned~N:=*, i;}$\\
        ${\tt int~a[M], b[M], c[M]}$\\[1ex]
        ${\tt for~(i=0;\;i<M;\;i:=i+1)~\{}$\\
        \>${\tt c[i]:=a[i]+b[i];}$\\
        ${\tt\}}$\\[1ex]
        ${\tt for~(i=0;\;i<M;\;i:=i+1)~\{}$\\
        \>${\tt assert~(c[i]=a[i]+b[i]);}$\\
        ${\tt\}}$
  \end{tabbing}\end{minipage}}
  \caption{The \textsc{sum\_arrays} benchmark from \SVCOMP\label{fig:sumarray}}
\end{figure}

\section{Related Work}
\label{sec:related}

The diameter of a transition system was introduced
in Biere et al.'s seminal paper on BMC~\cite{BiereCCZ99}
in the context of finite-state transition relations.
For finite-state transition relations, approximations
of the diameter can be computed 
symbolically by constraining the unwound transition relation
to exclude executions that visit states repeatedly~\cite{KroeningS03}.
For software, however, this technique is ineffective.
Baumgartner and K{\"u}hlmann use structural transformations of hardware designs
to reduce the reachability diameter of a hardware design to obtain a
complete BMC-based verification
method~\cite{DBLP:conf/date/BaumgartnerK04}.
This technique is not applicable in our context.

Trace automata are introduced in~\cite{HeizmannHP09}
as abstractions of safe traces of CFAs \cite{hrmgs02}, constructed
by means of interpolation. We use trace automata
to recognize redundant traces. 

Acceleration amounts to computing the transitive closure
of a infinite state transition relation~\cite{boigelot99,fl2002,BozgaIK10}.
Acceleration has been successfully combined with
abstract interpretation~\cite{DBLP:conf/sas/SchrammelJ11} as well
as interpolation-based invariant construction~\cite{philipp}. These 
techniques rely on over-approximate abstractions to prove safety.
We previously used acceleration and under-approximation to 
quickly find deep bugs~\cite{KroeningW06,kw2010,KroeningLW13}.
The quantified transition relations used to encode
under-appro\-xi\-ma\-tions pose an insurmountable challenge to 
interpolation-based refinement techniques~\cite{KroeningLW13},
making it difficult to combine the approach with traditional
software model checkers.

\section{Conclusion}
\label{sec:conclusion}

The reduction of the reachability diameter of
a program achieved by acceleration and
loop under-approximation enables the rapid detection
of bugs by means of BMC. 
Attempts to apply under-approximation to prove safety, however, 
have been disappointing:
the simple mechanism deployed by BMC-based tools to
detect that an unwinding bound is exhaustive is
not readily applicable to accelerated programs.

In this paper, we present a technique that constrains
the search space of an accelerated program, enabling
BMC-based tools to prove safety using a small unwinding
depth. To this end, we use \emph{trace automata} to 
eliminate redundant execution traces resulting from 
under-approximating acceleration.
Unlike other safety provers, our approach 
does not rely on over-approximation, nor does it require
the explicit computation of a fixed point. Using
unwinding assertions, the smaller diameter can be computed 
by means of a simple satisfiability check.

\bibliographystyle{splncs}
\bibliography{paper}

\clearpage

\appendix

\newcommand{\refcounter}[2]{\setcounter{#1}{#2}\addtocounter{#1}{-1}} %
\section{Proofs}
\label{sec:proofs}

\begin{lemma}
  Let $\uacc{\pi}$ be an under-approximating accelerator
  for the looping trace $\pi$. Then 
  $\uacc{\pi}\cdot\uacc{\pi}\preceq\uacc{\pi}$ holds.
\end{lemma}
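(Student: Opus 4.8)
The plan is to unfold the subsumption relation $\preceq$ into a statement about transition relations and then collapse the two-fold application of the accelerator into a single one by summing the two iteration counts. Recall that $\uacc{\pi}\cdot\uacc{\pi}\preceq\uacc{\pi}$ means exactly $\sem{\uacc{\pi}\cdot\uacc{\pi}}\subseteq\sem{\uacc{\pi}}$, and that by Table~\ref{tbl:trans} we have $\sem{\uacc{\pi}\cdot\uacc{\pi}}=\sem{\uacc{\pi}}\circ\sem{\uacc{\pi}}$. Accordingly, I would fix an arbitrary pair $\langle\state,\state''\rangle\in\sem{\uacc{\pi}}\circ\sem{\uacc{\pi}}$ together with an intermediate state $\state'$ witnessing the composition, so that $\langle\state,\state'\rangle\in\sem{\uacc{\pi}}$ and $\langle\state',\state''\rangle\in\sem{\uacc{\pi}}$, and the goal becomes showing $\langle\state,\state''\rangle\in\sem{\uacc{\pi}}$.

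Next, using the characterisation of the under-approximating accelerator in Definition~\ref{def:acc}, I would extract witnesses $i,j\in\mathds{N}_0$ with $i\leq\bound{\state}$, $\langle\state,\state'\rangle\in\sem{\pi}^i$, and $j\leq\bound{\state'}$, $\langle\state',\state''\rangle\in\sem{\pi}^j$. The natural candidate for a single accelerated step is the iteration count $k\defn i+j$. Since the powers of a relation compose additively, $\sem{\pi}^i\circ\sem{\pi}^j=\sem{\pi}^{i+j}$, the pair $\langle\state,\state''\rangle$ already lies in $\sem{\pi}^{k}$, so only the bound constraint $k\leq\bound{\state}$ remains to be verified before we may conclude $\langle\state,\state''\rangle\in\sem{\uacc{\pi}}$.

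The crux of the argument — and the only non-routine step — is this bound check, and it is precisely what property~\eqref{eq:bound} of Definition~\ref{def:acc} is designed to supply. Applying \eqref{eq:bound} to the first segment (from $i\leq\bound{\state}$ and $\langle\state,\state'\rangle\in\sem{\pi}^i$) yields $\bound{\state'}\leq\bound{\state}-i$. Combining this with $j\leq\bound{\state'}$ gives $i+j\leq i+\bound{\state'}\leq\bound{\state}$, so $k=i+j\leq\bound{\state}$ as required; hence $\langle\state,\state''\rangle\in\sem{\uacc{\pi}}$, which establishes $\sem{\uacc{\pi}\cdot\uacc{\pi}}\subseteq\sem{\uacc{\pi}}$ and therefore $\uacc{\pi}\cdot\uacc{\pi}\preceq\uacc{\pi}$. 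The one subtlety to handle carefully is the orientation of the relational composition $\circ$ (the forward reading matching $\sem{\stmt_1\cdot\stmt_2}=\sem{\stmt_1}\circ\sem{\stmt_2}$), which is what guarantees both that $\state'$ is the correct intermediate state and that the additive law $\sem{\pi}^i\circ\sem{\pi}^j=\sem{\pi}^{i+j}$ applies in the form used above.
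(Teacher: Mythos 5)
Your proof is correct and follows essentially the same route as the paper's: unfold $\sem{\uacc{\pi}\cdot\uacc{\pi}}$ via an intermediate state $\state'$ and witnesses $i,j\in\mathds{N}_0$, apply property~\eqref{eq:bound} to obtain $\bound{\state'}\leq\bound{\state}-i$, deduce $i+j\leq\bound{\state}$, and collapse $i+j$ into a single iteration count matching the definition of $\sem{\uacc{\pi}}$. The only (cosmetic) difference is that the paper first dispatches non-strictly under-approximating accelerators as a trivial case, whereas your argument handles all under-approximating accelerators uniformly through the $\beta$-characterisation of Definition~\ref{def:acc}.
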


\begin{proof}
  For accelerators that are not strictly
  under-approximating the claim holds trivially.
  Otherwise, we have
  \begin{displaymath}
    \begin{split}
    \langle\state,\state''\rangle\in\sem{\uacc{\pi}\cdot\uacc{\pi}}
    &\qquad\Leftrightarrow\\
    \exists \state'\,.\,\exists i,j\in\mathds{N}_0\,.&
    \left(\begin{array}{l}
      \langle\state,\state'\rangle\in\sem{\pi}^i\wedge
      i\leq\bound{\state}\quad\wedge\\
      \langle\state',\state''\rangle\in\sem{\pi}^j\;\wedge
      j\leq\bound{\state'}\\
    \end{array}\right)
    \end{split}
  \end{displaymath}
  If $\sigma'$ exists, 
  Condition \ref{eq:bound} in Definition~\ref{def:acc}
  guarantees that $\left(\bound{\state'}\leq\bound{\state}-i\right)$,
  and therefore  
  $\langle\state,\state''\rangle\in\sem{\uacc{\pi}\cdot\uacc{\pi}}$ 
  implies
  \begin{displaymath}
    \,\exists i,j\in\mathds{N}_0\,.
      \langle\state,\state''\rangle\in\sem{\pi}^{i+j}\wedge
      \underbrace{i\leq\bound{\state}\wedge j\leq\bound{\state}-i}_{
        (i+j)\leq\bound{\state}
      }\;.
  \end{displaymath}
  By replacing $i+j$ with a single variable $i$ we arrive
  at the definition of $\sem{\uacc{\pi}}$.
\end{proof}

\begin{lemma}
  Let $\acc{P}$ be an accelerated CFA, and
  $\lang_R$ be the corresponding restriction language.
  Let $\pi_1$ be a trace accepted by $\acc{P}$ such
  that $\pi_1\in\lang_R$. 
  Then there exists a trace $\pi_2$ which is accepted by $\acc{P}$
  such that $\pi_1\preceq\pi_2$ and $\pi_1$ is not a sub-trace
  of~$\pi_2$.
\end{lemma}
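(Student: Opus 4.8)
### Proof Proposal for Lemma~\ref{lem:restr_trace}

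The plan is to prove this by a case split on the reason why $\pi_1 \in \lang_R$. By Definition~\ref{def:restr_lang}, membership in $\lang_R$ means that $\pi_1$ contains a sub-trace matching the regular expression $(\pi \,\vert\, (\acc{\pi}\cdot\acc{\pi}))$ for some accelerator $\acc{\pi}$ with $\pi \preceq \acc{\pi}$. So I would write $\pi_1 = \alpha \cdot w \cdot \gamma$, where $w$ is the offending sub-trace, and then analyse the two forms $w$ can take. In each case the goal is to replace $w$ by a shorter trace $w'$ with $\sem{w} \subseteq \sem{w'}$, set $\pi_2 \defn \alpha \cdot w' \cdot \gamma$, and argue that $\pi_2$ is still accepted by $\acc{P}$, that $\pi_1 \preceq \pi_2$, and that the length reduction prevents $\pi_1$ from being a sub-trace of $\pi_2$.

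For the first case, $w = \pi$, I would replace the looping trace $\pi$ with its accelerator $\acc{\pi}$, i.e. set $w' \defn \acc{\pi}$. Since $\acc{P}$ augments $P$ with exactly the path $\trans{v_i}{v_i}{\acc{\pi}_i}$ at the head of each looping trace (Definition~\ref{def:acc_cfa}), and $\pi$ loops with that same head, the substituted trace $\pi_2$ is again accepted by $\acc{P}$. Subsumption $\pi_1 \preceq \pi_2$ follows from $\pi \preceq \acc{\pi}$ (which holds by the hypothesis $\pi \preceq \acc{\pi}$ guaranteed by Definition~\ref{def:restr_lang}) together with monotonicity of relational composition. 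For the second case, $w = \acc{\pi}\cdot\acc{\pi}$, I would take $w' \defn \acc{\pi}$; here subsumption is exactly Lemma~\ref{lem:subsume_iteration}, $\acc{\pi}\cdot\acc{\pi} \preceq \acc{\pi}$, and the self-loop structure again keeps $\pi_2$ accepted.

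The key technical point in both cases is that $w'$ is \emph{strictly shorter} than $w$: in the first case $\acc{\pi}$ is a single-step summary replacing the multi-statement trace $\pi$, and in the second case one copy of $\acc{\pi}$ replaces two. This strict decrease in $|\pi_2|$ compared to $|\pi_1|$ is what lets me conclude that $\pi_1$ is not a sub-trace of $\pi_2$ — a longer trace cannot embed in a shorter one. I would make this rigorous by comparing the number of occurrences of the relevant statements, rather than raw length, since acceleration statements and loop-body statements live over different alphabets; counting occurrences of the distinguished loop-body statements of $\pi$ is the cleanest invariant.

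The main obstacle I anticipate is the bookkeeping around what ``not a sub-trace'' means when $\pi_1$ may contain \emph{several} disjoint occurrences of patterns in $\lang_R$, and around the subtle requirement $\pi \preceq \acc{\pi}$. The lemma only promises one replacement step, so I must be careful that reducing one occurrence does not inadvertently recreate $\pi_1$ as a sub-trace elsewhere; the strict length/occurrence-count decrease handles this, but it needs to be stated as a genuine monovariant rather than assumed. The repeated application to eliminate \emph{all} restriction-language traces is then deferred to Theorem~\ref{thm:restr_cfa} via induction, so for the lemma itself I only need the single-step guarantee, keeping the case analysis self-contained.
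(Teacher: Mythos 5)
Your proposal is correct and follows essentially the same route as the paper's proof: the same case split on the two alternatives of $(\pi\,\vert\,(\acc{\pi}\cdot\acc{\pi}))$, replacing $\pi$ by $\acc{\pi}$ (justified by the $\pi\preceq\acc{\pi}$ requirement of Definition~\ref{def:restr_lang}) in the first case, and collapsing $\acc{\pi}\cdot\acc{\pi}$ to $\acc{\pi}$ via Lemma~\ref{lem:subsume_iteration} in the second. The only wrinkle is your claim of a strict length decrease in case (a): an accelerator is generally \emph{longer} as a statement sequence than the loop body it summarises (e.g., $i:=*;\,[{\tt x}+i<2^{32}];\,{\tt x}:={\tt x}+i$ versus ${\tt x}:={\tt x}+1$), so there you must fall back on your occurrence-counting refinement (the paper's version: the accelerator's statements simply differ from those it replaces), reserving the raw-length argument for case (b).
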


\begin{proof}
  The regular expression $(\pi\,\vert\,(\acc{\pi}\cdot\acc{\pi}))$
  can match the trace $\pi_1$ for two reasons:
  \begin{itemize}
  \item[(a)] The trace $\pi_1$ contains a sub-trace which 
    is a looping trace $\pi$ with a corresponding 
    accelerator $\acc{\pi}$ and $\pi\preceq\acc{\pi}$.
    We obtain $\pi_2$ by replacing 
    $\pi$ with $\acc{\pi}$. 
  \item[(b)] The trace $\pi_1$ contains the sub-trace
    $\acc{\pi}\cdot\acc{\pi}$ for some accelerator $\acc{\pi}$.
    Since $\acc{\pi}\cdot\acc{\pi}\preceq\acc{\pi}$
    (Lemma~\ref{lem:subsume_iteration}), we replace
    the sub-trace with $\acc{\pi}$ to obtain $\pi_2$.
  \end{itemize}
  Since the 
  accelerator $\acc{\pi}$ differs from the sub-trace it replaces
  in case (a), and $\vert\pi_2\vert<\vert\pi_1\vert$ in case (b),
  $\pi_1$ can not be contained in $\pi_2$. 
\end{proof}

\section{Detailed Experimental Results}
\label{sec:detailed_experiments}

Tables~\ref{tbl:results_safe}, \ref{tbl:results_unsafe}, and 
\ref{tbl:results_crafted} show the detailed experimental results
for Table~\ref{tbl:results-summary} in Section~\ref{sec:experiments}.

\begin{sidewaystable}[t]\centering
{
\tiny
\begin{tabular}{|l|c||c|r||c||c|r|r||c|r|r|r|}
\hline
 & & \multicolumn{2}{|c||}{\CBMC} & Accelerated? & \multicolumn{3}{c||}{\CBMC + Acceleration} & \multicolumn{4}{c|}{\CBMC + Acceleration + Trace Automata} \\
 Name & Expected & Result & Time(s)               & & Result & \shortstack{Acceleration \\ time (s)} & \shortstack{Checking \\ time (s)}  & Result & \shortstack{Acceleration \\ time (s)} & \shortstack{Checking \\ time (s)} & \shortstack{Size \\ increase} \\
 \hline
 \hline
 \multicolumn{12}{|l|}{\SVCOMP} \\
 \hline
 \hline
 array\_true.c & \safe & \safe & 0.04s & & --- & --- & --- & --- & --- & --- & --- \\
bubble\_sort\_true.c & \safe & \timeout & 30.00s & Yes & \timeout & 3.90s & 30.00s &\timeout & 3.97s & 30.00s &20\% \\
count\_up\_down\_true.c & \safe & \incomplete & 0.84s & Yes & \incomplete & 0.20s & 1.20s &\safe & 0.21s & 0.25s &11\% \\
eureka\_01\_true.c & \safe & \safe & 12.64s & Yes & \timeout & 1.90s & 30.00s &\timeout & 1.95s & 30.00s &34\% \\
eureka\_05\_true.c & \safe & \safe & 0.11s & Yes & \crash & 0.56s & 2.64s &\safe & 0.57s & 2.22s &31\% \\
for\_infinite\_loop\_1\_true.c & \safe & \incomplete & 0.05s & Yes & \incomplete & 0.12s & 0.09s &\safe & 0.13s & 0.06s &11\% \\
for\_infinite\_loop\_2\_true.c & \safe & \unsafe & 0.08s & Yes & \unsafe & 0.13s & 0.05s &\safe & 0.14s & 0.07s &12\% \\
heavy\_true.c & \safe & \timeout & 30.00s & & --- & --- & --- & --- & --- & --- & --- \\
insertion\_sort\_true.c & \safe & \timeout & 30.00s & Yes & \crash & 0.46s & 30.00s &\crash & 0.48s & 30.00s &18\% \\
invert\_string\_true.c & \safe & \safe & 0.12s & Yes & \crash & 0.87s & 30.00s &\safe & 0.92s & 2.13s &28\% \\
linear\_sea.ch\_true.c & \safe & \incomplete & 3.78s & Yes & \timeout & 0.33s & 30.00s &\safe & 0.35s & 0.26s &20\% \\
lu.cmp\_true.c & \safe & \safe & 0.34s & & --- & --- & --- & --- & --- & --- & --- \\
matrix\_true.c & \safe & \safe & 0.03s & & --- & --- & --- & --- & --- & --- & --- \\
n.c11\_true.c & \safe & \incomplete & 0.91s & & --- & --- & --- & --- & --- & --- & --- \\
n.c24\_true.c & \safe & \timeout & 30.00s & Yes & \crash & 3.60s & 11.41s &\timeout & 3.66s & 30.00s &17\% \\
n.c40\_true.c & \safe & \safe & 0.04s & Yes & \safe & 0.25s & 0.14s &\safe & 0.26s & 0.15s &11\% \\
nec40\_true.c & \safe & \safe & 0.04s & Yes & \safe & 0.25s & 0.13s &\safe & 0.25s & 0.17s &11\% \\
string\_true.c & \safe & \safe & 11.20s & & --- & --- & --- & --- & --- & --- & --- \\
sum01\_true.c & \safe & \incomplete & 0.81s & Yes & \incomplete & 0.50s & 6.24s &\safe & 0.51s & 0.43s &19\% \\
sum03\_true.c & \safe & \incomplete & 0.07s & Yes & \incomplete & 0.47s & 0.23s &\safe & 0.46s & 0.22s &17\% \\
sum04\_true.c & \safe & \safe & 0.00s & Yes & \incomplete & 0.23s & 0.22s &\safe & 0.24s & 0.13s &11\% \\
sum\_array\_true.c & \safe & \crash & 30.00s & Yes & \crash & 0.56s & 30.00s &\safe & 0.62s & 1.75s &29\% \\
terminator\_02\_true.c & \safe & \safe & 2.58s & & --- & --- & --- & --- & --- & --- & --- \\
terminator\_03\_true.c & \safe & \timeout & 30.00s & & --- & --- & --- & --- & --- & --- & --- \\
trex01\_true.c & \safe & \incomplete & 13.96s & & --- & --- & --- & --- & --- & --- & --- \\
trex02\_true.c & \safe & \incomplete & 1.27s & & --- & --- & --- & --- & --- & --- & --- \\
trex03\_true.c & \safe & \incomplete & 9.51s & Yes & \incomplete & 6.22s & 0.75s &\incomplete & 6.09s & 1.69s &54\% \\
trex04\_true.c & \safe & \incomplete & 0.91s & & --- & --- & --- & --- & --- & --- & --- \\
veris.c\_NetBSD-libc\_\_loop\_true.c & \safe & \safe & 17.61s & & --- & --- & --- & --- & --- & --- & --- \\
veris.c\_OpenSER\_\_cases1\_stripFullBoth\_arr\_true.c & \safe & \timeout & 30.00s & Yes & \incomplete & 1.05s & 11.58s &\timeout & 1.16s & 30.00s &77\% \\
veris.c\_sendmail\_\_tTflag\_arr\_one\_loop\_true.c & \safe & \safe & 0.88s & & --- & --- & --- & --- & --- & --- & --- \\
vogal\_true.c & \safe & \safe & 10.75s & Yes & \timeout & 1.60s & 30.00s &\timeout & 1.85s & 30.00s &64\% \\
while\_infinite\_loop\_1\_true.c & \safe & \incomplete & 0.03s & Yes & \incomplete & 0.01s & 0.02s &\safe & 0.01s & 0.03s &15\% \\
while\_infinite\_loop\_2\_true.c & \safe & \incomplete & 0.06s & Yes & \incomplete & 0.03s & 0.02s &\safe & 0.03s & 0.05s &16\% \\
while\_infinite\_loop\_3\_true.c & \safe & \incomplete & 0.07s & & --- & --- & --- & --- & --- & --- & --- \\
\hline
Total & 35 & 14 & 298.73s & 21 & 2 & 23.24s & 244.72s & 14 & 23.86s & 189.61s & \\
\hline

 \hline
\end{tabular}
}\\[2ex]

{\centering\scriptsize Key: Safe: \safe, Unsafe: \unsafe, Timeout: \timeout, Crash: \crash, Incomplete (unable to prove safety or find a bug): \incomplete}
\caption{Detailed experimental results for safe {\sc SVCOMP} benchmarks\label{tbl:results_safe}}
\end{sidewaystable}

\begin{sidewaystable}\centering
{
\tiny
\begin{tabular}{|l|c||c|r||c||c|r|r||c|r|r|r|}
\hline
 & & \multicolumn{2}{|c||}{\CBMC} & Accelerated? & \multicolumn{3}{c||}{\CBMC + Acceleration} & \multicolumn{4}{c|}{\CBMC + Acceleration + Trace Automata} \\
 Name & Expected & Result & Time(s)               & & Result & \shortstack{Acceleration \\ time (s)} & \shortstack{Checking \\ time (s)}  & Result & \shortstack{Acceleration \\ time (s)} & \shortstack{Checking \\ time (s)} & \shortstack{Size \\ increase} \\
 \hline
 \hline
 \multicolumn{12}{|l|}{\SVCOMP} \\
 \hline
 \hline
 array\_false.c & \unsafe & \unsafe & 0.03s & & --- & --- & --- & --- & --- & --- & --- \\
bubble\_sort\_false.c & \unsafe & \timeout & 30.00s & & --- & --- & --- & --- & --- & --- & --- \\
compact\_false.c & \unsafe & \timeout & 30.00s & & --- & --- & --- & --- & --- & --- & --- \\
count\_up\_down\_false.c & \unsafe & \unsafe & 0.26s & Yes & \unsafe & 0.20s & 0.22s &\unsafe & 0.21s & 0.30s &11\% \\
eureka\_01\_false.c & \unsafe & \timeout & 30.00s & Yes & \timeout & 1.98s & 30.00s &\timeout & 2.00s & 30.00s &27\% \\
for\_bounded\_loop1\_false.c & \unsafe & \unsafe & 0.67s & & --- & --- & --- & --- & --- & --- & --- \\
heavy\_false.c & \unsafe & \timeout & 30.00s & & --- & --- & --- & --- & --- & --- & --- \\
insertion\_sort\_false.c & \unsafe & \timeout & 30.00s & Yes & \crash & 0.64s & 12.31s &\crash & 0.62s & 14.58s &17\% \\
invert\_string\_false.c & \unsafe & \timeout & 30.00s & Yes & \crash & 0.61s & 30.00s &\unsafe & 0.64s & 3.00s &17\% \\
linear\_search\_false.c & \unsafe & \unsafe & 0.47s & Yes & \unsafe & 0.36s & 0.18s &\unsafe & 0.38s & 0.34s &20\% \\
ludcmp\_false.c & \unsafe & \unsafe & 0.45s & & --- & --- & --- & --- & --- & --- & --- \\
matrix\_false.c & \unsafe & \timeout & 30.00s & Yes & \timeout & 0.24s & 30.00s &\timeout & 0.28s & 30.00s &19\% \\
nec11\_false.c & \unsafe & \unsafe & 0.29s & Yes & \unsafe & 0.13s & 0.08s &\unsafe & 0.14s & 0.12s &12\% \\
nec20\_false.c & \unsafe & \unsafe & 0.24s & Yes & \unsafe & 0.51s & 0.37s &\unsafe & 0.52s & 0.44s &17\% \\
string\_false.c & \unsafe & \crash & 30.00s & & --- & --- & --- & --- & --- & --- & --- \\
sum01\_bug02\_false.c & \unsafe & \unsafe & 0.27s & Yes & \unsafe & 1.89s & 0.82s &\unsafe & 1.95s & 1.01s &27\% \\
sum01\_bug02\_sum01\_bug02\_base.case\_false.c & \unsafe & \unsafe & 0.26s & Yes & \unsafe & 0.45s & 1.94s &\unsafe & 0.47s & 0.80s &20\% \\
sum01\_false.c & \unsafe & \unsafe & 0.22s & Yes & \unsafe & 0.46s & 0.35s &\unsafe & 0.47s & 0.30s &22\% \\
sum03\_false.c & \unsafe & \unsafe & 2.45s & Yes & \unsafe & 0.65s & 0.65s &\unsafe & 0.70s & 0.80s &32\% \\
sum04\_false.c & \unsafe & \unsafe & 0.05s & Yes & \unsafe & 0.35s & 0.19s &\unsafe & 0.36s & 0.25s &24\% \\
sum\_array\_false.c & \unsafe & \crash & 30.00s & Yes & \timeout & 0.59s & 30.00s &\timeout & 0.64s & 30.00s &28\% \\
terminator\_01\_false.c & \unsafe & \unsafe & 0.28s & Yes & \unsafe & 0.12s & 0.13s &\unsafe & 0.12s & 0.15s &12\% \\
terminator\_02\_false.c & \unsafe & \unsafe & 3.56s & & --- & --- & --- & --- & --- & --- & --- \\
terminator\_03\_false.c & \unsafe & \unsafe & 0.42s & & --- & --- & --- & --- & --- & --- & --- \\
trex01\_false.c & \unsafe & \unsafe & 2.69s & & --- & --- & --- & --- & --- & --- & --- \\
trex02\_false.c & \unsafe & \unsafe & 0.66s & & --- & --- & --- & --- & --- & --- & --- \\
trex03\_false.c & \unsafe & \unsafe & 8.21s & Yes & \unsafe & 3.96s & 0.70s &\unsafe & 3.98s & 1.65s &54\% \\
verisec\_NetBSD-libc\_\_loop\_false.c & \unsafe & \unsafe & 10.45s & & --- & --- & --- & --- & --- & --- & --- \\
verisec\_OpenSER\_\_cases1\_stripFullBoth\_arr\_false.c & \unsafe & \timeout & 30.00s & Yes & \timeout & 1.03s & 30.00s &\timeout & 1.20s & 30.00s &76\% \\
verisec\_sendmail\_\_tTflag\_arr\_one\_loop\_false.c & \unsafe & \timeout & 30.00s & & --- & --- & --- & --- & --- & --- & --- \\
vogal\_false.c & \unsafe & \crash & 30.00s & Yes & \timeout & 1.62s & 30.00s &\timeout & 1.83s & 30.00s &68\% \\
while\_infinite\_loop\_4\_false.c & \unsafe & \unsafe & 3.03s & & --- & --- & --- & --- & --- & --- & --- \\
\hline
Total & 32 & 20 & 394.96s & 18 & 11 & 15.79s & 197.94s & 12 & 16.51s & 173.74s & \\
\hline

 \hline
\end{tabular}
}\\[2ex]

{\centering\scriptsize Key: Safe: \safe, Unsafe: \unsafe, Timeout: \timeout, Crash: \crash, Incomplete (unable to prove safety or find a bug): \incomplete}
\caption{Detailed experimental results for unsafe {\sc SVCOMP} benchmarks\label{tbl:results_unsafe}}
\end{sidewaystable}

\begin{sidewaystable}[p]\centering
{
\tiny
\begin{tabular}{|l|c||c|r||c||c|r|r||c|r|r|r|}
\hline
 & & \multicolumn{2}{|c||}{\CBMC} & Accelerated? & \multicolumn{3}{c||}{\CBMC + Acceleration} & \multicolumn{4}{c|}{\CBMC + Acceleration + Trace Automata} \\
 Name & Expected & Result & Time(s)               & & Result & \shortstack{Acceleration \\ time (s)} & \shortstack{Checking \\ time (s)}  & Result & \shortstack{Acceleration \\ time (s)} & \shortstack{Checking \\ time (s)} & \shortstack{Size \\ increase} \\
 \hline
 \hline
 \multicolumn{12}{|l|}{Crafted} \\
 \hline
 \hline
 array\_safe1 & \safe & \incomplete & 0.20s & Yes & \incomplete & 0.15s & 0.27s &\safe & 0.16s & 0.08s &11\% \\
array\_safe2 & \safe & \incomplete & 0.08s & Yes & \timeout & 0.14s & 30.00s &\safe & 0.13s & 0.09s &10\% \\
array\_safe3 & \safe & \incomplete & 0.48s & Yes & \incomplete & 0.12s & 0.28s &\safe & 0.14s & 0.07s &15\% \\
array\_safe4 & \safe & \incomplete & 0.49s & Yes & \incomplete & 0.12s & 0.19s &\safe & 0.14s & 0.05s &13\% \\
const\_safe1 & \safe & \incomplete & 0.27s & Yes & \incomplete & 0.15s & 0.06s &\safe & 0.15s & 0.08s &12\% \\
diamond\_safe1 & \safe & \incomplete & 0.51s & Yes & \incomplete & 0.18s & 0.15s &\safe & 0.18s & 0.13s &26\% \\
diamond\_safe2 & \safe & \incomplete & 7.53s & Yes & \incomplete & 0.66s & 0.77s &\safe & 0.66s & 0.45s &32\% \\
functions\_safe1 & \safe & \incomplete & 0.06s & Yes & \incomplete & 0.13s & 0.08s &\safe & 0.15s & 0.06s &12\% \\
multivar\_safe1 & \safe & \incomplete & 0.40s & Yes & \incomplete & 0.18s & 0.08s &\safe & 0.19s & 0.07s &12\% \\
overflow\_safe1 & \safe & \incomplete & 0.04s & Yes & \incomplete & 0.13s & 0.06s &\safe & 0.14s & 0.07s &13\% \\
phases\_safe1 & \safe & \incomplete & 0.04s & Yes & \incomplete & 0.23s & 0.09s &\safe & 0.25s & 0.14s &26\% \\
simple\_safe1 & \safe & \incomplete & 0.04s & Yes & \incomplete & 0.14s & 0.08s &\safe & 0.15s & 0.06s &13\% \\
simple\_safe2 & \safe & \incomplete & 1.00s & Yes & \incomplete & 0.13s & 0.09s &\safe & 0.15s & 0.10s &13\% \\
simple\_safe3 & \safe & \incomplete & 0.24s & Yes & \incomplete & 0.13s & 0.07s &\safe & 0.14s & 0.07s &13\% \\
simple\_safe4 & \safe & \incomplete & 0.04s & Yes & \incomplete & 0.16s & 0.14s &\safe & 0.18s & 0.07s &13\% \\
\hline
Total & 15 & 0 & 11.42s & 15 & 0 & 2.75s & 32.41s & 15 & 2.91s & 1.59s & \\
\hline
array\_unsafe1 & \unsafe & \incomplete & 0.49s & Yes & \unsafe & 0.12s & 0.04s &\unsafe & 0.13s & 0.06s &14\% \\
array\_unsafe2 & \unsafe & \incomplete & 0.09s & Yes & \unsafe & 0.15s & 10.42s &\unsafe & 0.15s & 0.11s &10\% \\
array\_unsafe3 & \unsafe & \incomplete & 0.48s & Yes & \unsafe & 0.14s & 0.04s &\unsafe & 0.14s & 0.06s &14\% \\
const\_unsafe1 & \unsafe & \incomplete & 0.05s & Yes & \unsafe & 0.12s & 0.05s &\unsafe & 0.13s & 0.07s &12\% \\
diamond\_unsafe1 & \unsafe & \incomplete & 0.51s & Yes & \unsafe & 0.25s & 0.10s &\unsafe & 0.24s & 0.20s &26\% \\
diamond\_unsafe2 & \unsafe & \incomplete & 7.03s & Yes & \unsafe & 0.86s & 0.88s &\unsafe & 0.89s & 1.41s &33\% \\
functions\_unsafe1 & \unsafe & \incomplete & 0.06s & Yes & \unsafe & 0.13s & 0.07s &\unsafe & 0.12s & 0.07s &12\% \\
multivar\_unsafe1 & \unsafe & \incomplete & 0.05s & Yes & \unsafe & 0.20s & 0.12s &\unsafe & 0.19s & 0.08s &11\% \\
overflow\_unsafe1 & \unsafe & \incomplete & 0.05s & Yes & \unsafe & 0.13s & 0.09s &\unsafe & 0.16s & 0.08s &13\% \\
phases\_unsafe1 & \unsafe & \incomplete & 0.06s & Yes & \unsafe & 0.23s & 0.10s &\unsafe & 0.23s & 0.13s &26\% \\
simple\_unsafe1 & \unsafe & \incomplete & 0.04s & Yes & \unsafe & 0.12s & 0.06s &\unsafe & 0.15s & 0.06s &13\% \\
simple\_unsafe2 & \unsafe & \incomplete & 0.04s & Yes & \unsafe & 0.12s & 0.05s &\unsafe & 0.13s & 0.06s &13\% \\
simple\_unsafe3 & \unsafe & \incomplete & 0.04s & Yes & \unsafe & 0.13s & 0.06s &\unsafe & 0.14s & 0.07s &12\% \\
simple\_unsafe4 & \unsafe & \incomplete & 0.04s & Yes & \unsafe & 0.15s & 0.16s &\unsafe & 0.15s & 0.09s &13\% \\
\hline
Total & 14 & 0 & 9.03s & 14 & 14 & 2.85s & 12.24s & 14 & 2.95s & 2.55s & \\
\hline

 \hline
\end{tabular}}\\[2ex]

{\centering\scriptsize Key: Safe: \safe, Unsafe: \unsafe, Timeout: \timeout, Crash: \crash, Incomplete (unable to prove safety or find a bug): \incomplete}
\caption{Detailed experimental results for crafted benchmarks\label{tbl:results_crafted}
}
\end{sidewaystable}

\end{document}